\newtheorem{thm}{Theorem}
\newtheorem{prop}{Proposition}
\newtheorem{lemma}{Lemma}
\newcommand{\bbr}{{\mathbb R}}
\newcommand{\bbs}{{\mathbb S}}
\newcommand{\p}{\hat{p}}
\newcommand{\q}{\hat{q}}
\begin{document}
\title{Late-time behaviour of the Einstein-Boltzmann system with a positive cosmological constant}

\author[1]{Ho Lee\footnote{holee@khu.ac.kr}}
\author[2]{Ernesto Nungesser\footnote{ernesto.nungesser@icmat.es}}

\affil[1]{Department of Mathematics and Research Institute for Basic Science, Kyung Hee University, Seoul, 02447, Republic of Korea}
\affil[2]{Instituto de Ciencias Matem\'{a}ticas (CSIC-UAM-UC3M-UCM), 28049 Madrid, Spain}

\maketitle

\begin{abstract}
In this paper we study the Einstein-Boltzmann system for Israel particles with a positive cosmological constant. We consider spatially homogeneous solutions of Bianchi types except IX and obtain future global existence and asymptotic behaviour of solutions to the Einstein-Boltzmann system. The result shows that the solutions converge to the de Sitter solution at late times.
\end{abstract}

\section{Introduction}
Einstein's equations are the basic equations of general relativity and can be used to study the time evolution of our Universe. They describe the interaction between geometry of spacetime and matter distribution in it so that a suitable matter model should be chosen and coupled to them. Many different types of matter models can be considered, for instance the perfect fluid is one of the most frequently used matter models in general relativity (see Section 3 of \cite{Rendall} for other matter models). In this paper we are interested in a kinetic matter model. In kinetic theory matter is considered as a collection of particles, and the Boltzmann equation describes the time evolution of particles interacting with each other through binary collisions. When the effect of collisions is negligible one may consider the Vlasov equation, and there are many results known for the Einstein-Vlasov system (see \cite{R} and the references of it). In this paper we are interested in the Einstein-Boltzmann system. The main interest of this paper is to study the late time behaviour of solutions to the Einstein-Boltzmann system with a positive cosmological constant.

In the presence of a positive cosmological constant it is expected that solutions to any Einstein-matter equations converge to the de Sitter solution at late times (at least for generic solutions). This is called the cosmic no hair conjecture, and in this paper we study this topic with the Einstein-Boltzmann system. We will consider spatially homogeneous spacetimes of Bianchi type. In this case Wald \cite{W} showed that solutions to the Einstein equations converge to the de Sitter solution provided that matter model chosen satisfies some energy conditions and the Bianchi model considered is not type IX. Hayoung Lee \cite{Lee} obtained the result for the Einstein-Vlasov system together with detailed asymptotic behaviour of the metric and matter terms. In the Boltzmann case we obtained similar results in \cite{LN2,LN3} where the FLRW and the Bianchi I cases were studied. In this paper we extend \cite{LN2,LN3} to all Bianchi cases except type IX. For the scattering kernel of the collision operator we consider the one for Israel particles. The main result of this paper is given in Theorem \ref{Thm}, and the proof is to use the energy method as in \cite{LN2}. 

This paper is organised as follows. In Section 2 we introduce the notations and terminology which will be assumed in this paper. In the second part of the section we introduce the Einstein-Boltzmann system for Israel particles with a positive cosmological constant having Bianchi symmetry. The main theorem will also be stated in this part. In Section 3 and 4 we study the Einstein equations and the Boltzmann equation, respectively, and show that solutions have certain desired properties. At the end of Section 4 we combine the results to prove the main theorem. In Section 5 we discuss the result of the present paper and related problems.

\section{The Einstein-Boltzmann system with Bianchi symmetry}
\subsection{Terminology}\label{Sec T}
In this paper we are interested in spatially homogeneous spacetimes of Bianchi type. Let $G$ be a $3$-dimensional Lie group and let $E_a$ with $a=1,2,3$ be a basis of the Lie algebra. We consider a Lorentz manifold $M=I(\subset \bbr)\times G$ with a metric $g$ of the following type:
\begin{align}
g=-dt^2+\chi_{ab}\xi^a\xi^b,
\end{align}
where $\xi^a$ are the $1$-forms dual to the $E_a$, and the $\chi_{ab}$ is a symmetric and positive-definite matrix, which only depends on $t$. For any spacetimes of Bianchi type we can assume the metric to be as described above by using a left-invariant frame (see \cite{R} for more details). By abuse of notation we assume that $E_\alpha$ with $\alpha=0,1,2,3$ is a left-invariant frame with $E_0=\partial/\partial t$. Throughout the paper Greek indices run from $0$ to $3$, while Latin indices run from $1$ to $3$, and the Einstein summation convention will be assumed. Indices are lowered and raised by the metric $g_{\alpha\beta}$ and $g^{\alpha\beta}$, respectively, where $g^{\alpha\beta}$ is the inverse of the matrix $g_{\alpha\beta}$, and the $g_{\alpha\beta}$ are the components of the metric $g$.

At each spacetime point $x\in M$ the tangent space $T_xM$ is spanned by the basis vectors $E_\alpha|_x$.
We take a coordinate system on $T_xM$ defined by
\[
p^\alpha E_\alpha|_x\mapsto (p^0,p^1,p^2,p^3).
\]
In this paper we are interested in spatially homogeneous solutions to the Boltzmann equation so that the $x$-dependence of the basis vectors may be ignored due to the use of the left-invariant frame. We express a momentum simply as $p^\alpha E_\alpha$, and use the above coordinates to write the distribution function as
\[
f=f(t,p).
\]
Here, momentum variables without indices denote three dimensional vectors with upper indices:
\[
p=(p^1,p^2,p^3),
\]
while for covariant vectors with lower indices we write
\[
p_*=(p_1,p_2,p_3),
\]
and the zeroth components are understood to be derived from the mass shell condition:
\[
p^0=\sqrt{1+p_ap^a},
\]
where we have assumed that all the particles have unit mass.
In order to derive a representation of the collision operator of the Boltzmann equation we shall introduce an orthonormal frame. Let $e_\mu=e_\mu^\alpha E_\alpha$ denote an orthonormal frame, i.e. $g_{\alpha\beta}e^\alpha_\mu e^\beta_\nu=\eta_{\mu\nu}$, where $\eta_{\mu\nu}$ denotes the Minkowski metric, and we take $e_0=E_0$ for simplicity. In an orthonormal frame a momentum is written as
\[
p^\alpha E_\alpha=\p^\mu e_\mu,
\]
where the hat indicates that the momentum has been written in an orthonormal frame. Note that
\[
p^\alpha=\p^\mu e_\mu^\alpha,\quad \p_\mu=\eta_{\mu\nu}\p^\nu=p_\alpha e^\alpha_\mu,
\]
where the Minkowski metric applies in an orthonormal frame.

In this paper we use two different types of multi-index notations for high order derivatives. Suppose that we are considering a partial derivative of order $m$. The first type uses an $m$-tuple of integers between $1$ and $3$. Let $A=(a_1,\cdots,a_m)$ be an $m$-tuple of integers with $a_i\in\{1,2,3\}$ and $i=1,\cdots,m$. We write
\[
\partial^A=\partial^{a_1}\cdots\partial^{a_m},
\]
where $\partial^a=\partial /\partial p_a$ is the partial derivative with respect to $p_a$ for $a\in\{1,2,3\}$. In this case the total order of differentiation is denoted by $|A|=m$. The second type uses a $3$-tuple of integers which are non-negative. Let $\mathcal{I}=(i,j,k)$ be a triple of non-negative integers, and we write
\[
\partial^{\mathcal I}=(\partial^1)^i(\partial^2)^j(\partial^3)^k,
\]
where $\partial^a$ for $a=1,2,3$ are understood as above. In this case the total order of differentiation is given by $|{\mathcal I}|=i+j+k=m$. Note that $\partial^{a_1}\cdots\partial^{a_m}$ is a permutation of $(\partial^1)^i(\partial^2)^j(\partial^3)^k$. For multi-indices of the first type we use capital letters from the beginning of the alphabet, while for the second type we use capital letters from the middle of the alphabet. They will also be distinguished by using different typefaces. For partial derivatives in an orthonormal frame, we use hats in a similar way, and the derivatives with respect to $p_a$ and $\p_a$ are related to each other as $\partial^a=e^a_b\hat{\partial}^b$. For a multi-index $A=(a_1,\cdots,a_m)$ we have
\[
\partial^A=e^A_B\hat{\partial}^B,
\]
where $e^A_B=e^{a_1}_{b_1}\cdots e^{a_m}_{b_m}$ for $B=(b_1,\cdots,b_m)$.

We also consider the usual $l^2$-norm: for a three-dimensional vector $v$, we define
\[
|v|=\sqrt{\sum_{i=1}^3(v^i)^2},
\]
and note that $|\p|^2=\eta_{ab}\p^a\p^b$. With this notation we define the weight function:
\[
\langle p_*\rangle=\sqrt{1+|p_*|^2},
\]
and note that it is different from $p^0$ in general. With this weight function we define the norm of a function $f=f(t,p_*)$ as follows: for a non-negative integer $N$,
\begin{align*}
\|f(t)\|^2_{k,N}=\sum_{|A|\leq N}\|\partial^A f(t)\|_k^2,\quad \|f(t)\|_k^2=\int_{\bbr^3}\langle p_*\rangle^{2k}e^{p^0(t)}|f(t,p_*)|^2dp_*,
\end{align*}
where $k$ is a positive real number. Note that the norm $\|\cdot\|_{k,N}$ can also be defined in terms of the second type of multi-indices.

\subsection{The Einstein-Boltzmann system with Bianchi symmetry}
In a spatially homogeneous setting the Einstein equations are given by
\begin{align}
\dot{\chi}_{ab}&=2k_{ab},\label{evolution1}\\
\dot{k}_{ab}&=2k^c_ak_{bc}-kk_{ab}-R_{ab}+S_{ab}+\frac{1}{2}(\rho-S)\chi_{ab}+\Lambda \chi_{ab}.\label{evolution2}
\end{align}
Here, the dot denotes the differentiation with respect to time, the $k_{ab}$ and $k=\chi^{ab}k_{ab}$ are the second fundamental form and the trace of it, respectively, $R_{ab}$ is the Ricci tensor of the metric $\chi_{ab}$, and $\Lambda>0$ is the cosmological constant. For the constraint equations we have
\begin{align}
R-k_{ab}k^{ab}+k^2&=2\rho+2\Lambda,\label{constraint1}\\
\nabla^ak_{ab}&=-T_{0b},\label{constraint2}
\end{align}
where $R$ is the Ricci scalar of the metric $\chi_{ab}$. We refer to the Chapter 25 of \cite{R} for more details. The quantities $S_{ab}$, $\rho$, $S$, and $T_{0b}$ are matter terms which are to be derived from the Boltzmann equation. Let a distribution function $f$ be given, then the stress energy tensor $T_{\alpha\beta}$ is defined by
\begin{align}
T_{\alpha\beta}=(\det\chi)^{\frac12}\int_{\bbr^3}f(t,p)\frac{p_\alpha p_\beta}{p^0}dp,\label{stress energy}
\end{align}
and the matter terms are given by $S_{ab}=T_{ab}$, $\rho=T_{00}$, and $S=\chi^{ab}S_{ab}$.

The Boltzmann equation is written as
\[
\frac{\partial f}{\partial t}-\frac{1}{p^0}\Gamma^a_{\alpha\beta}p^\alpha p^\beta\frac{\partial f}{\partial p^a}=Q(f,f).
\]
Let $\nabla$ be a connection compatible with the metric $g$, then the connection coefficients are defined by $\nabla_{E_\alpha}E_\beta=\Gamma^\gamma_{\alpha\beta}E_\gamma$ and can be expressed in terms of the structure constants $[E_a,E_b]=C^c_{ab}E_c$ with the Koszul formula:
\begin{align*}
2 \langle \nabla_V W, X \rangle = V \langle W, X \rangle + W \langle X, V \rangle - X \langle V,W \rangle - \langle V, [ W,X] \rangle + \langle W,[X,V] \rangle + \langle X, [ V, W] \rangle,
\end{align*}
where we used the notation $g(V,W)= \langle V,W \rangle$. If we write the distribution function as $f=f(t,p_*)$, then the left hand side of the Boltzmann equation reduces to
\begin{align}
\frac{\partial f}{\partial t}+\frac{1}{p^0}C_{ba}^cp_cp^b\frac{\partial f}{\partial p_a}=Q(f,f).\label{Vlasov}
\end{align}
To obtain an expression of the right hand side we use the orthonormal frame approach: we write the Boltzmann equation in an orthonormal frame to obtain an expression of the collision operator from special relativity, and revert to the original left-invariant frame through the transformation $\p\leftrightarrow p$. As a result we obtain
\begin{align}
Q(f,f)&=(\det \chi)^{-\frac12}\int_{\bbr^3}\int_{\bbs^2}v_M\sigma(h,\theta)\Big(f(p_*')f(q_*')-f(p_*)f(q_*)\Big)d\omega dq_*.\label{Q}
\end{align}
Here, the $v_M$ is the M{\o}ller velocity defined by
\[
v_M=\frac{h\sqrt{s}}{4p^0q^0},
\]
where $h$ and $s$ are the relative momentum and the total energy, respectively, defined by
\[
h=\sqrt{(p_\alpha-q_\alpha)(p^\alpha-q^\alpha)},\quad s=-(p_\alpha+q_\alpha)(p^\alpha+q^\alpha).
\]
Let us write $n^\alpha=p^\alpha+q^\alpha$ for simplicity. The post-collision momenta $p_k'$ and $q_k'$ are written as
\begin{align*}
\left(
\begin{array}{c}
p'^0\\
p'_k
\end{array}
\right)=
\left(
\begin{array}{c}
\displaystyle
p^0+2\bigg(-q^0\frac{n_a e^a_b\omega^b}{\sqrt{s}}+q_ae^a_b\omega^b+\frac{n_ae^a_b\omega^bn_c q^c}{\sqrt{s}(n^0+\sqrt{s})}\bigg)\frac{n_de^d_i\omega^i}{\sqrt{s}}\\
\displaystyle
p_k+2\bigg(-q^0\frac{n_ae^a_b\omega^b}{\sqrt{s}}+q_ae^a_b\omega^b+\frac{n_ae^a_b\omega^bn_cq^c}{\sqrt{s}(n^0+\sqrt{s})}\bigg)
\bigg(g_{ka}e^a_b\omega^b+\frac{n_ae^a_b\omega^bn_k}{\sqrt{s}(n^0+\sqrt{s})}\bigg)
\end{array}
\right),
\end{align*}
and $q_k'=n_k-p_k'$, where $\omega=(\omega^1,\omega^2,\omega^3)\in\bbs^2$, and recall that the $e^a_b$ are the components of an orthonormal frame. For a background on the representation of post-collision momenta we refer to the Appendix of \cite{LN2}. The quantity $\sigma(h,\theta)$ is the scattering kernel. In this paper we will consider the scattering kernel for Israel particles of the following type:
\[
\sigma(h,\theta)=\frac{4}{h(4+h^2)},
\]
so that the dependence on the scattering angle $\theta$ will be ignored. Finally, we combine the expressions \eqref{Vlasov} and \eqref{Q} together with the assumption on the scattering kernel to obtain
\begin{align}
\frac{\partial f}{\partial t}+\frac{1}{p^0}C_{ba}^cp_cp^b\frac{\partial f}{\partial p_a}=(\det\chi)^{-\frac12}\iint \frac{1}{p^0q^0\sqrt{s}}\Big(f(p_*')f(q_*')-f(p_*)f(q_*)\Big)d\omega dq_*,\label{Boltzmann}
\end{align}
and this equation will be referred to as the Boltzmann equation in the present paper.

To summarise the Einstein-Boltzmann system for Israel particles comprises the equations \eqref{evolution1}--\eqref{stress energy} and \eqref{Boltzmann}. The following is the main theorem of this paper. The proof will be presented in the following sections.

\begin{thm}\label{Thm}
Consider the Einstein-Boltzmann system for Israel particles with Bianchi symmetries except type IX and a positive cosmological constant $\Lambda$. Let $\chi_{ab}(t_0)$, $k_{ab}(t_0)$, and $f(t_0)$ be initial data of the Einstein-Boltzmann system satisfying the constraint equations, where the Hubble variable is given by $H(t_0)\leq (6/5)^{1/2}\gamma$ with $\gamma=(\Lambda/3)^{1/2}$ and the distribution function is given by $\|f(t_0)\|_{k+1/2,N}<\infty$ with $k>5$ and $N\geq 3$. Then, there exists $\varepsilon>0$ such that if $\|f(t_0)\|_{k+1/2,N}<\varepsilon$, then there exist unique global-in-time classical solutions $\chi_{ab}$, $k_{ab}$, and $f$ to the Einstein-Boltzmann system corresponding to the initial data. The spacetime is geodesically future complete, the distribution function $f$ is non-negative, and the solutions satisfy the following asymptotic behaviour:
\begin{align*}
&H=\gamma+O(e^{-2\gamma t}),\\
&\sigma^{ab}\sigma_{ab}=O(e^{-2\gamma t}),\\
&\chi_{ab}=e^{2\gamma t}\Big(G_{ab}+O(e^{-\gamma t})\Big),\\
&\chi^{ab}=e^{-2\gamma t}\Big(G^{ab}+O(e^{-\gamma t})\Big),
\end{align*}
where $G_{ab}$ and $G^{ab}$ are constant matrices, and the distribution function satisfies in an orthonormal frame
\[
f(t,\hat{p})\leq C \varepsilon (1+e^{2\gamma t} |\hat{p}|^2)^{-\frac12 k} e^{-\frac12 p^0},
\]
where $C$ is a positive constant.
\end{thm}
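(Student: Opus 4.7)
The proof is a bootstrap/continuation argument for the coupled system, along the lines of \cite{LN2,LN3}. Local existence of classical solutions gives a maximal forward interval $[t_0,T_{\max})$ on which smooth solutions exist, and on this interval one posits bootstrap hypotheses of the form
\[
\gamma\le H(t)\le 2\gamma,\qquad \|f(t)\|_{k+1/2,N}\le 2\varepsilon,
\]
together with preliminary Wald-type control on the shear $\sigma_{ab}$. The aim is to show that, under these assumptions, each side of the system improves its own bound, which forces $T_{\max}=\infty$ and yields the stated asymptotics.

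\textbf{Einstein side (Section 3).} I would adapt Wald's no-hair argument to Boltzmann matter. The Hamiltonian constraint \eqref{constraint1} reads
\[
H^2 = \gamma^2+\tfrac{1}{6}\sigma^{ab}\sigma_{ab}+\tfrac{1}{3}\rho-\tfrac{1}{6}R,
\]
and for Bianchi types other than IX the term $-R$ is bounded above in a way that supports Wald's monotonicity argument. Tracing \eqref{evolution2} produces an evolution equation for $H$; combined with the pointwise control of $\rho,S,S_{ab}$ through \eqref{stress energy} and the bootstrap bound on $f$, this yields $H-\gamma=O(e^{-2\gamma t})$, and then the constraint gives $\sigma^{ab}\sigma_{ab}=O(e^{-2\gamma t})$. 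Integrating $\dot\chi_{ab}=2k_{ab}$ with $k_{ab}=H\chi_{ab}+\chi_{ac}\sigma^c{}_b$ then produces $\chi_{ab}=e^{2\gamma t}(G_{ab}+O(e^{-\gamma t}))$ with a constant, positive-definite limit $G_{ab}$; the expansion of $\chi^{ab}$ follows by matrix inversion, and geodesic future completeness follows from the exponential lower bound on $\det\chi$.

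\textbf{Boltzmann side (Section 4).} I would run a weighted energy estimate on \eqref{Boltzmann} in the norm $\|\cdot\|_{k+1/2,N}$. Applying $\partial^A$ with $|A|\le N$, pairing with $\langle p_*\rangle^{2k+1}e^{p^0}\partial^A f$, and integrating in $p_*$, the transport term is antisymmetric up to a commutator absorbed by the weight, while $\partial_t e^{p^0}=-(k^{ab}p_ap_b/p^0)e^{p^0}$ supplies, under the bootstrap, a coercive dissipation term. For the Israel cross section the kernel $v_M\sigma/(p^0q^0\sqrt{s})$ in \eqref{Q} is bounded, so the cubic contribution from $Q(f,f)$ is estimated by the small factor $\|f\|_{k+1/2,N}$ times the dissipation and absorbed for $\varepsilon$ small. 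This yields
\[
\frac{d}{dt}\|f(t)\|_{k+1/2,N}^2 \le 0,
\]
whence $\|f(t)\|_{k+1/2,N}\le\|f(t_0)\|_{k+1/2,N}<\varepsilon$ for all $t\ge t_0$. Non-negativity of $f$ is preserved by a Picard iteration with positive initial data, and the pointwise bound in the orthonormal frame follows from Sobolev embedding on the three-dimensional momentum space (justified by $k>5$, $N\ge 3$) together with the identification $|\hat p|^2=\chi^{ab}p_ap_b\sim e^{-2\gamma t}G^{ab}p_ap_b$.

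\textbf{Main obstacle.} The most delicate step is the weighted energy estimate, because the weight $e^{p^0}$ is metric-dependent through $p^0=\sqrt{1+\chi^{ab}p_ap_b}$, and the post-collision momenta in $Q(f,f)$ depend nonlinearly on the orthonormal-frame components $e^a_b$. Commuting $\partial^A$ past $Q(f,f)$ therefore generates error terms involving derivatives of $e^a_b$ and of $\chi^{ab}$, which can only be controlled using the Einstein-side asymptotics already assumed in the bootstrap, so the two bootstraps have to close simultaneously. A further subtlety is that the dissipation provided by $\partial_t e^{p^0}$ weakens as the universe expands, since $p^0\to 1$ for fixed covariant momentum as $\chi^{ab}\to 0$, so the closure hinges on a careful balance between the weight exponent $k$, the regularity level $N$, and the exponential decay rates extracted from the Einstein side.
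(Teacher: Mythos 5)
Your overall strategy is in the right family, but there are two genuine gaps that would sink the argument as written, and one structural difference worth noting.

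\textbf{The norm is not monotone.} The central claim on the Boltzmann side, that the dissipation coming from $\partial_t e^{p^0}$ absorbs the cubic collision contribution so that $\frac{d}{dt}\|f\|^2_{k+1/2,N}\le 0$, does not hold, and the paper does not attempt it. The term $\partial_t e^{p^0}$ produces
\[
\int \langle p_*\rangle^{2k}e^{p^0}\,\frac{(\sigma^{ab}+H\chi^{ab})p_ap_b}{p^0}\,|\partial^{\mathcal I}f|^2\,dp_*,
\]
which carries a factor $\chi^{ab}p_ap_b/p^0\sim e^{-2\gamma t}|p_*|^2$ and therefore degenerates both for bounded $|p_*|$ and as $t\to\infty$; it is not a coercive lower bound on any norm of $f$. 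Meanwhile the collision contribution is estimated as $C(\det\chi)^{-1/4}\|f\|^3_{k,N}\sim Ce^{-3\gamma t/2}\|f\|^3_{k,N}$, with a decay rate \emph{slower} than $e^{-2\gamma t}$. There is no proportionality $\|Q\text{-term}\|\lesssim\|f\|\cdot(\text{dissipation})$ of the Guo type here. What the paper actually does is drop that nonnegative term entirely, obtaining
\[
\frac12\frac{d}{dt}\|f(t)\|_{k,N}^2\le Ce^{-\gamma t}\|f(t)\|_{k,N}^2+Ce^{-\frac32\gamma t}\|f(t)\|_{k,N}^3,
\]
and then integrates this Riccati-type inequality directly; the coefficients are integrable in $t$, so the norm stays finite for small data even though it is not monotone. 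Your route via a closed monotone energy identity is not available.

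\textbf{The sign of the $\partial_t p^0$ term is conditional.} For the term above even to be nonnegative one needs $|\sigma^{ab}p_ap_b|\le H\chi^{ab}p_ap_b$, which by Cauchy--Schwarz requires $F:=\sigma_{ab}\sigma^{ab}/(4H^2)\le 1/4$. This is precisely why the theorem imposes $H(t_0)\le (6/5)^{1/2}\gamma$: together with the Wald monotonicity $\dot H\le 0$ and the constraint, this gives $F\le\frac32(1-\gamma^2/H(t_0)^2)\le\frac14$ uniformly. Your bootstrap hypothesis $\gamma\le H\le 2\gamma$ is too weak to secure this (it would only give $F\le 9/8$), so the sign of your would-be dissipation is not controlled in your setup. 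You never use the specific constant $(6/5)^{1/2}$, which is a tell that this point is missing.

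\textbf{Iteration vs.\ bootstrap.} This is a genuine structural difference rather than a gap. The paper exploits the fact that Proposition~2 (Einstein side) only needs a \emph{bounded} $f$ satisfying $f\le C\langle p_*\rangle^{-k}$, not a small one, while Proposition~3 (Boltzmann side) only needs the metric asymptotics from Section~4.1. The coupling is therefore weak enough to be handled by a clean alternating Picard iteration ($f_0\mapsto\chi_1\mapsto f_1\mapsto\chi_2\mapsto\cdots$) rather than a simultaneous bootstrap/continuation argument. Your bootstrap could in principle be made to work, but it creates exactly the ``two bootstraps must close simultaneously'' difficulty you flag in your final paragraph; the paper sidesteps it. Also, the pointwise bound is obtained from boundedness of $\|f\|_{k+1/2,N}$ with $N\ge 3$ via Sobolev in $p_*$, but $k>5$ is needed for integrability of the stress-energy tensor, not for the embedding; worth keeping these roles separate.
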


\section{Estimates for the Einstein part}\label{Sec E}
\subsection{Assumption on the distribution function}
To obtain the existence of solutions to the Einstein-Boltzmann system we can use the standard iteration method. In this section we assume that a suitable distribution function is given and show that the Einstein equations have solutions with desired asymptotic properties. To be precise we assume that the distribution function satisfies
\begin{align}
f(t,p_*)\leq C\langle p_*\rangle^{-k},\label{assumption f}
\end{align}
where $C$ is a positive constant independent of the metric and $k>5$ is a real number. The restriction on $k$ is required for the stress energy tensor to be well-defined. 

\subsection{Estimates for the Einstein part}
Given a distribution function $f$ satisfying \eqref{assumption f} global-in-time existence of solutions to the evolution equations \eqref{evolution1}--\eqref{evolution2} is easily obtained by following the proofs of \cite{Lee} and \cite{R94}. In the Vlasov case one can assume that a distribution function has compact support and show that matter terms are bounded at finite times by considering characteristics of the Vlasov equation. In the Boltzmann case we assume that a distribution function decays at infinity such as \eqref{assumption f} and estimate the stress energy tensor to obtain boundedness of matter terms. Note that the stress energy tensor \eqref{stress energy} can be written as
\[
T_{\alpha\beta}=(\det\chi)^{-\frac12}\int_{\bbr^3}f(t,p_*)\frac{p_\alpha p_\beta}{p^0}dp_*,
\]
and the assumption \eqref{assumption f} ensures that $T_{\alpha\beta}$ is bounded. The rest of arguments are the same as in \cite{Lee} and \cite{R94}, and we obtain global-in-time solutions to the equations \eqref{evolution1}--\eqref{evolution2} for a given distribution function. 

Asymptotic behaviour of solutions to the equations \eqref{evolution1}--\eqref{evolution2} is also similarly obtained as in \cite{Lee}. We only briefly sketch the estimates for the asymptotic behaviour. We first note that the sign of the second fundamental form used in \cite{Lee} is different from the one in \cite{R}, and we will follow the sign convention of \cite{R}. We define the Hubble variable $H$ as
\[
H=\frac13 k,
\]
where $k=\chi^{ab}k_{ab}$, and decompose the second fundamental form as
\[
k_{ab}=\sigma_{ab}+H\chi_{ab}.
\]
The quantity $\sigma_{ab}$ is called the shear tensor, and we rewrite the constraint equation \eqref{constraint1} as
\[
\frac{R}{2}-\frac{\sigma_{ab}\sigma^{ab}}{2}-\rho=-3H^2+\Lambda.
\]
Since we are considering the spacetimes which are not the Bianchi type IX, the Ricci scalar $R$ is non-positive (see \cite{Lee,W} for more details), and we conclude that
\[
3H^2\geq\Lambda,
\]
which is equivalent to $H\geq\gamma=(\Lambda/3)^{1/2}$. We now use the evolution equations \eqref{evolution1}--\eqref{evolution2} to derive a differential inequality for $H$ such that $\dot{H}\leq -H^2+\Lambda/3$, and we obtain
\[
0\leq H-\gamma\leq \frac{2\gamma(H(t_0)-\gamma)}{H(t_0)+\gamma}e^{-2\gamma(t-t_0)},
\]
where $t_0$ is an initial time. To estimate the shear scalar $\sigma_{ab}\sigma^{ab}$ we consider the above constraint equation again such that
\begin{align}
\frac{\sigma_{ab}\sigma^{ab}}{2}\leq 3H^2-\Lambda= 3(H+\gamma)(H-\gamma),\label{inequality shear}
\end{align}
and conclude that
\[
0\leq \sigma_{ab}\sigma^{ab}\leq Ce^{-2\gamma(t-t_0)},
\]
where the constant $C$ depends on the initial data. We now estimate the metric $\chi$. Let us consider a scaled quantity $\bar{\chi}_{ab}=e^{-2\gamma t}\chi_{ab}$ and derive an equation
\[
\dot{\bar{\chi}}_{ab}=2(H-\gamma)\bar{\chi}_{ab}+2e^{-2\gamma t}\sigma_{ab}.
\]
This implies that $|\chi_{ab}|\leq Ce^{2\gamma t}$, which in turn implies $|\sigma_{ab}|\leq Ce^{\gamma t}$ (see \cite{Lee} for more details), hence the right hand side of the above differential equation is integrable. We may now define a constant matrix $G_{ab}$ such that
\[
G_{ab}=\bar{\chi}_{ab}(t_0)+2\int_{t_0}^\infty (H(s)-\gamma)\bar{\chi}_{ab}(s)+e^{-2\gamma s}\sigma_{ab}(s)ds,
\]
and conclude that
\[
\chi_{ab}=e^{2\gamma t}\Big(G_{ab}+O(e^{-\gamma t})\Big).
\]
Asymptotic behaviour of $\chi^{ab}$ is similarly obtained, and the results are summarized as follows.

\begin{prop}\label{Prop E}
Let a distribution function $f$ be given and satisfy \eqref{assumption f}. Suppose that $\chi_{ab}(t_0)$ and $k_{ab}(t_0)$ are initial data of the evolution equations \eqref{evolution1}--\eqref{evolution2}, which satisfy the constraint equations \eqref{constraint1}--\eqref{constraint2}. Then, the evolution equations \eqref{evolution1}--\eqref{evolution2} admit global-in-time solutions with the following asymptotic behaviour:
\begin{align*}
&H=\gamma+O(e^{-2\gamma t}),\\
&\sigma_{ab}\sigma^{ab}=O(e^{-2\gamma t}),\\
&\chi_{ab}=e^{2\gamma t}\Big(G_{ab}+O(e^{-\gamma t})\Big),\\
&\chi^{ab}=e^{-2\gamma t}\Big(G^{ab}+O(e^{-\gamma t})\Big),
\end{align*}
where $\gamma=(\Lambda/3)^{1/2}$, and $G_{ab}$ and $G^{ab}$ are constant matrices.
\end{prop}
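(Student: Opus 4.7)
The plan is to follow the approach of Wald and of H.~Lee \cite{Lee}, adapted to a Boltzmann distribution function as sketched in the text preceding the proposition. Global existence is the easier half: under \eqref{assumption f} with $k>5$ the integrand defining the stress energy tensor in \eqref{stress energy} is dominated by $C\langle p_*\rangle^{-k}|p_*|^2/p^0$, whose integral is finite, and it depends on the metric only through $(\det\chi)^{1/2}$ and through index raising, both of which are bounded on any finite time interval on which $\chi_{ab}$ stays positive-definite. A standard continuation argument in the spirit of \cite{Lee,R94} then promotes local to global existence of $(\chi_{ab},k_{ab})$.

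For the asymptotic statements I would first introduce $H=k/3$ and $\sigma_{ab}=k_{ab}-H\chi_{ab}$. Combining \eqref{evolution1}--\eqref{evolution2} with the Hamiltonian constraint \eqref{constraint1} and using that $\sigma_{ab}$ is trace-free yields
\begin{align*}
\dot H = -H^2 + \tfrac13\Lambda - \tfrac13\sigma_{ab}\sigma^{ab} - \tfrac16(\rho+S).
\end{align*}
Non-negativity of $\rho$ and $S$ (immediate from $f\ge 0$ and positive-definiteness of $\chi^{ab}$) together with $\sigma_{ab}\sigma^{ab}\ge 0$ gives $\dot H\le -H^2+\Lambda/3$. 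Excluding Bianchi type IX forces $R\le 0$, so the rewritten constraint
\begin{align*}
3H^2-\Lambda=\rho+\tfrac12\sigma_{ab}\sigma^{ab}-\tfrac12 R
\end{align*}
implies $H\ge\gamma$. The Riccati-type inequality then factors as $\tfrac{d}{dt}(H-\gamma)\le -(H-\gamma)(H+\gamma)$, whose integration produces $H-\gamma=O(e^{-2\gamma t})$; plugging this back into the rewritten constraint gives $\sigma_{ab}\sigma^{ab}=O(e^{-2\gamma t})$.

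Next I would rescale $\bar\chi_{ab}=e^{-2\gamma t}\chi_{ab}$, whose evolution is
\begin{align*}
\dot{\bar\chi}_{ab}=2(H-\gamma)\bar\chi_{ab}+2e^{-2\gamma t}\sigma_{ab}.
\end{align*}
A Gronwall argument based on the decay of $H-\gamma$ first yields a uniform bound $|\bar\chi_{ab}|\le C$, hence $|\chi_{ab}|\le Ce^{2\gamma t}$. Combining this with a lower bound on $\det\chi$ (from $\partial_t\det\chi=2k\det\chi$ and the bound on $k$) one extracts a componentwise estimate $|\sigma_{ab}|\le Ce^{\gamma t}$ from the scalar bound on $\sigma_{ab}\sigma^{ab}$. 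Both terms on the right-hand side are then integrable on $[t_0,\infty)$, so that defining
\begin{align*}
G_{ab}=\bar\chi_{ab}(t_0)+2\int_{t_0}^{\infty}\bigl[(H(s)-\gamma)\bar\chi_{ab}(s)+e^{-2\gamma s}\sigma_{ab}(s)\bigr]\,ds
\end{align*}
yields $\chi_{ab}=e^{2\gamma t}(G_{ab}+O(e^{-\gamma t}))$. The analogous argument applied to $\bar\chi^{ab}=e^{2\gamma t}\chi^{ab}$, using $\dot\chi^{ab}=-2k^{ab}$, produces the stated expansion for $\chi^{ab}$.

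The main obstacle I anticipate is the circular character of the last step: extracting the componentwise bound $|\sigma_{ab}|\le Ce^{\gamma t}$ from the scalar bound $\sigma_{ab}\sigma^{ab}=O(e^{-2\gamma t})$ requires bounds on $\chi^{ab}$, yet the target conclusions are precisely such bounds. This is resolved by a continuous-induction bootstrap on the maximal interval of existence, exactly as in \cite{Lee}; it is for this reason that the proposition is stated as an adaptation of those techniques rather than re-derived in full detail here.
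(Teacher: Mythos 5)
Your proposal follows essentially the same route as the paper: boundedness of $T_{\alpha\beta}$ from \eqref{assumption f} for global existence, the decomposition $k_{ab}=\sigma_{ab}+H\chi_{ab}$, the Riccati inequality $\dot H\le -H^2+\Lambda/3$ together with $H\ge\gamma$ from $R\le 0$, the constraint to control $\sigma_{ab}\sigma^{ab}$, and then integration of the equation for $\bar\chi_{ab}=e^{-2\gamma t}\chi_{ab}$ to identify $G_{ab}$. Your explicit derivation of the Raychaudhuri-type equation and your remark about the bootstrap needed to pass from the scalar bound on $\sigma_{ab}\sigma^{ab}$ to the componentwise bound $|\sigma_{ab}|\le Ce^{\gamma t}$ are correct and in fact fill in details the paper leaves to the cited references.
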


We remark that the determinant of the $3$-metric $\chi$ is estimated as
\[
\det \chi=O(e^{6\gamma t}).
\]
Moreover, the components of the orthonormal frame introduced in Section \ref{Sec T} have the following asymptotic behaviour:
\[
e^a_b=O(e^{-\gamma t}),\quad (e^{-1})^a_b=O(e^{\gamma t}),
\]
where $e^{-1}$ is the inverse of the $3\times 3$ matrix $e$. An explicit formula of an orthonormal frame is given in \cite{LN1}. For later use we consider the following quantity:
\[
F=\frac{\sigma_{ab}\sigma^{ab}}{4H^2}.
\]
If we assume that the Hubble variable is initially bounded such that $H(t_0)\leq(6/5)^{1/2}\gamma$, then by the inequality \eqref{inequality shear} we have
\begin{align}\label{boundedness F}
F\leq \frac{3H^2-3\gamma^2}{2H^2}\leq\frac32\bigg(1-\frac{\gamma^2}{H^2(t_0)}\bigg)\leq\frac14,
\end{align}
where we used the fact that $H$ is non-increasing in time. We remark that the boundedness of $F$ will be used to have monotonicity of $p^0$ in later sections. 

\section{Proof of the main theorem}\label{Sec M}
In this part we prove the main theorem. In Section \ref{Sec AM} we make suitable assumptions on the metric so that the Boltzmann equation admits global-in-time solutions. The global existence is proved in Section \ref{Sec B} under the assumptions. Note that the metric given in Proposition \ref{Prop E} satisfies the assumptions in Section \ref{Sec AM}, and the distribution function given in Proposition \ref{Prop B} satisfies the assumption \eqref{assumption f}. This shows that the iteration for the Einstein-Boltzmann system is well-defined and admits global-in-time solutions with the desired asymptotic behaviour. This will be discussed in Section \ref{Sec EB}.

\subsection{Assumptions on the metric}\label{Sec AM}
Let $\chi_{ab}$ be a given spatial metric, and we assume that the metric satisfies the following properties. Let us write $\gamma=(\Lambda/3)^{1/2}$. For any $t\geq t_0$ and $p_*\in\bbr^3$, the metric satisfies
\[
\frac{1}{C}e^{-2\gamma t}|p_*|^2\leq \chi^{ab}p_ap_b\leq C e^{-2\gamma t}|p_*|^2
\]
for some $C>0$. The components of the metric satisfy $|\chi_{ab}|\leq Ce^{2\gamma t}$ and $|\chi^{ab}|\leq Ce^{-2\gamma t}$, and spatial components of the corresponding orthonormal frame satisfy $|e^a_b|\leq Ce^{-\gamma t}$ and $|(e^{-1})^a_b|\leq Ce^{\gamma t}$. The Hubble variable $H$ is bounded, and the scaled shear scalar satisfies $F\leq 1/4$ for all $t\geq t_0$. 

\subsection{Estimates for the Boltzmann part}\label{Sec B}
In this part we study the existence of solutions to the Boltzmann equation under the assumptions of Section \ref{Sec AM}. We first collect several basic lemmas and then provide the main estimates which are given in Lemma \ref{Lem f} and \ref{Lem partial f}. 

\begin{lemma}\label{Lem p}
Let $p^\alpha$ be a momentum satisfying the mass shell condition. Then, we have
\begin{align*}
&\frac{\partial p^0}{\partial p_a}=\frac{p^a}{p^0},\\
&\frac{\partial p^0}{\partial t}=-\frac{(\sigma^{ab}+H\chi^{ab})p_ap_b}{p^0},\\
&\frac{\partial p^b}{\partial p_a}=\chi^{ab},\allowdisplaybreaks\\
&\frac{\partial \langle p_*\rangle^{2k}}{\partial p_a}=2k\eta^{ab}p_b\langle p_*\rangle^{2k-2},
\end{align*}
where $\eta^{ab}$ is the Minkowski metric and $k$ is a real number.
\end{lemma}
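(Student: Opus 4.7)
The proof of Lemma \ref{Lem p} is a direct computation from the definitions, so my plan is to verify each of the four identities in turn, relying on the mass shell relation, the rules for raising indices with $\chi^{ab}$, and the evolution equation $\dot{\chi}_{ab}=2k_{ab}$ from \eqref{evolution1}.

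For the first identity, I would rewrite the mass shell condition as $p^0=\sqrt{1+\chi^{ab}p_a p_b}$ and differentiate in $p_a$, treating $\chi^{ab}$ as $p_*$-independent. The symmetry of $\chi^{ab}$ gives $\partial p^0/\partial p_a = \chi^{ab}p_b/p^0$, which is $p^a/p^0$ by the definition of raised indices. The third identity is then immediate from $p^b=\chi^{bc}p_c$, and the fourth follows by the chain rule from $\langle p_*\rangle^{2}=1+\eta^{ab}p_a p_b$, noting that in this context $\eta^{ab}$ plays the role of the Euclidean metric on spatial indices, consistent with the $l^2$-norm appearing in the definition of $\langle p_*\rangle$.

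The second identity is the only one that requires genuine input from the Einstein equations, and it is the point I expect to be least mechanical. I would square the mass shell relation to avoid the square root, differentiate $(p^0)^2=1+\chi^{ab}p_a p_b$ in $t$ at fixed $p_a$, and obtain $2p^0\,\partial_t p^0=(\partial_t\chi^{ab})p_a p_b$. To compute $\dot{\chi}^{ab}$, I would differentiate the identity $\chi^{ac}\chi_{cb}=\delta^a_{\ b}$ with respect to $t$ and substitute \eqref{evolution1}, which gives $\dot{\chi}^{ab}=-2\chi^{ac}\chi^{bd}k_{cd}=-2k^{ab}$. Inserting the decomposition $k^{ab}=\sigma^{ab}+H\chi^{ab}$ from Section \ref{Sec E} then yields the claimed formula.

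I do not anticipate any serious obstacle. The one place where care is needed is the sign and index bookkeeping in passing from $\dot{\chi}_{ab}$ to $\dot{\chi}^{ab}$, and making sure the raised-index version of the decomposition $k_{ab}=\sigma_{ab}+H\chi_{ab}$ is applied correctly; everything else is a one-line application of the chain rule.
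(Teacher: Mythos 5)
Your computation is correct and is precisely the "simple calculation" the paper invokes while omitting the details: differentiate the mass shell relation $p^0=\sqrt{1+\chi^{ab}p_ap_b}$ in $p_a$, use $\dot{\chi}^{ab}=-2k^{ab}$ (from differentiating $\chi^{ac}\chi_{cb}=\delta^a_b$ and inserting \eqref{evolution1}) together with $k^{ab}=\sigma^{ab}+H\chi^{ab}$ for the $t$-derivative, and apply the chain rule to $\langle p_*\rangle^{2k}=(1+\eta^{ab}p_ap_b)^k$. There is nothing to compare, since the paper skips the proof; your write-up is a faithful and complete version of the intended argument.
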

\begin{proof}
The lemma is proved by simple calculations, and we skip the proof.
\end{proof}

For the proofs of Lemma \ref{Lem pp'q'}, \ref{Lem 1/p^0}, and \ref{Lem partial p'} below we refer to \cite{LN3} where the Bianchi I case is studied. Since the asymptotic behaviour of the spatial metric assumed in Section \ref{Sec AM} is the same with the one considered in the Bianchi I case, the proof given in \cite{LN3} still applies to Lemma \ref{Lem pp'q'}. To prove the results of Lemma \ref{Lem 1/p^0} and \ref{Lem partial p'} one can use the orthonormal frame approach, hence they are exactly the same with the ones given in \cite{LN3}. We also refer to \cite{LN2} for more details.

\begin{lemma}\label{Lem pp'q'}
Let $p_*'$ and $q_*'$ be post-collision momenta for given $p_*$ and $q_*$. Then, we have
\[
\langle p_*\rangle\leq C\langle p_*'\rangle\langle q_*'\rangle,
\]
where $C$ is a positive constant.
\end{lemma}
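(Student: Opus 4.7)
The plan is to pass to the orthonormal frame, in which two-particle collisions obey the familiar special-relativistic conservation laws, and then translate the resulting bound back to the left-invariant frame using the asymptotic estimates on $e^a_b$ from Section \ref{Sec AM}.

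In the orthonormal frame, the spatial Euclidean norm coincides with $\sqrt{\eta^{ab}\hat p_a\hat p_b}$, so the mass-shell condition reads $\hat p^0=\sqrt{1+|\hat p_*|^2}=\langle\hat p_*\rangle$, and the choice $e_0=E_0$ forces $\hat p^0=p^0$. A binary collision preserves four-momentum, so $\hat p^0+\hat q^0=\hat p'^0+\hat q'^0$, and dropping the non-negative term $\hat q^0$ gives $\hat p^0\le\hat p'^0+\hat q'^0$. Since $\hat p'^0,\hat q'^0\ge 1$ by the mass-shell condition, the elementary inequality $a+b\le 2ab$ for $a,b\ge 1$ produces
\[
\langle\hat p_*\rangle\le 2\langle\hat p'_*\rangle\langle\hat q'_*\rangle.
\]

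Next I translate this back. Using $\hat p_a=p_b e^b_a$ together with the bounds $|e^a_b|\le Ce^{-\gamma t}$ and $|(e^{-1})^a_b|\le Ce^{\gamma t}$ from Section \ref{Sec AM}, I get two-sided comparisons between $|\hat p_*|$ and $|p_*|$, and hence between $\langle\hat p_*\rangle$ and $\langle p_*\rangle$, at the cost of $e^{\pm\gamma t}$ prefactors. Applying these comparisons on both sides of the Minkowski-frame bound — one application to write $\langle p_*\rangle$ in terms of $\langle\hat p_*\rangle$, and one each for the primed momenta — and collecting the resulting time factors into the constant yields the desired inequality $\langle p_*\rangle\le C\langle p'_*\rangle\langle q'_*\rangle$.

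The only real bookkeeping difficulty is that the weight $\langle p_*\rangle$ is built from the Euclidean norm of the lower-index momenta, which is not the natural object for the Lorentzian kinematics; one must carefully combine the two directions of the orthonormal-frame transformation so that the $e^{+\gamma t}$ factor introduced on the left is compensated by the $e^{-\gamma t}$ factors appearing on the right. As the author points out, the argument is identical to the Bianchi I treatment in \cite{LN3}, because the asymptotic behaviour of $e^a_b$ assumed in Section \ref{Sec AM} matches the one used there.
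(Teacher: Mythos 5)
Your overall strategy (orthonormal frame, energy conservation, translate back) is the right one, and the opening Minkowski-frame estimate $\langle\hat p_*\rangle\le 2\langle\hat p'_*\rangle\langle\hat q'_*\rangle$ is correct. But the translation step contains a genuine gap: the claim that ``the $e^{+\gamma t}$ factor introduced on the left is compensated by the $e^{-\gamma t}$ factors appearing on the right'' does not hold at the level of the weights $\langle\cdot\rangle$. From $|\hat p'_*|\le Ce^{-\gamma t}|p'_*|$ one obtains only $\langle\hat p'_*\rangle\le C\langle p'_*\rangle$ with a time-independent constant — \emph{not} $\langle\hat p'_*\rangle\le Ce^{-\gamma t}\langle p'_*\rangle$, because $\langle\hat p'_*\rangle\ge 1$ always, while $Ce^{-\gamma t}\langle p'_*\rangle\to 0$ when $p'_*$ is bounded and $t\to\infty$. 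So on the right-hand side you gain no decaying factor. On the left you do pick up $\langle p_*\rangle\le Ce^{\gamma t}\langle\hat p_*\rangle$, and combining gives only $\langle p_*\rangle\le Ce^{\gamma t}\langle p'_*\rangle\langle q'_*\rangle$, with an unwanted growing factor.

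The argument can be repaired, but one must unpack the weight and keep track of which pieces scale. From $\hat p^0\le\hat p'^0+\hat q'^0-1\le\hat p'^0\hat q'^0$ one gets the \emph{sharper} orthonormal-frame identity $1+|\hat p_*|^2\le(1+|\hat p'_*|^2)(1+|\hat q'_*|^2)$, i.e.
\[
|\hat p_*|^2\le|\hat p'_*|^2+|\hat q'_*|^2+|\hat p'_*|^2|\hat q'_*|^2.
\]
Each term on the right is a product of squared Euclidean norms of hatted momenta, and \emph{these} do scale: $|\hat p'_*|\le Ce^{-\gamma t}|p'_*|$. Multiplying the inequality by $e^{2\gamma t}$ and using $|p_*|^2\le Ce^{2\gamma t}|\hat p_*|^2$ on the left, the $e^{-2\gamma t}$ factors from the first two terms cancel the $e^{2\gamma t}$ exactly, while the last term produces $Ce^{-2\gamma t}|p'_*|^2|q'_*|^2\le C|p'_*|^2|q'_*|^2$ for $t\ge t_0$. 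Adding $1$ to both sides then yields $\langle p_*\rangle^2\le C\langle p'_*\rangle^2\langle q'_*\rangle^2$. The point is that the cancellation happens term by term after expanding $\langle\cdot\rangle^2=1+|\cdot|^2$, not at the level of the weights themselves; your write-up asserts the cancellation without performing this decomposition, and as stated the step is false.
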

\begin{proof}
We refer to \cite{LN3} for the proof.
\end{proof}

\begin{lemma}\label{Lem 1/p^0}
For a multi-index $A$, there exist polynomials ${\mathcal P}$ and ${\mathcal P}_i$ such that
\begin{align*}
&\partial^A \bigg[\frac{1}{p^0}\bigg]=\frac{e^A_B}{(p^0)^{|A|+1}}{\mathcal P}\bigg(\frac{\p}{p^0}\bigg),\\
&\partial^A\bigg[\frac{1}{\sqrt{s}}\bigg]=\frac{e^A_C}{\sqrt{s}}\sum_{i=0}^{|A|}\bigg(\frac{q^0}{s}\bigg)^i \bigg(\frac{1}{p^0}\bigg)^{|A|-i}{\mathcal P}_i\bigg(\frac{\p}{p^0},\frac{\q}{q^0}\bigg),
\end{align*}
where the multi-indices $B$ and $C$ are summed with the polynomials.
\end{lemma}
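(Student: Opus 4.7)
The plan is to prove both identities simultaneously by induction on $|A|$, using the basic derivatives from Lemma~\ref{Lem p} and the identity $\partial^a = e^a_b \hat{\partial}^b$ relating frame derivatives. Whenever a bare component $\p^b$ appears through differentiation, I will immediately rewrite it as $(\p^b/p^0)\cdot p^0$, absorbing the extra power of $p^0$ into the explicit prefactor so that only the bounded ratios $\p/p^0,\q/q^0$ occur inside the polynomials. This is the structural reason why polynomials (as opposed to rational functions) suffice.

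For the first identity, the base case $|A|=0$ holds with $\mathcal{P}\equiv 1$. For the inductive step, assume the formula at level $|A|$ and apply $\partial^a$. Using $\partial^a p^0 = p^a/p^0$, one computes
\[
\partial^a (p^0)^{-(|A|+1)} = -(|A|+1)\,e^a_b\,(\p^b/p^0)\,(p^0)^{-(|A|+2)},
\]
and, using $\partial^a \p^c = e^a_b\eta^{cb}$,
\[
\partial^a(\p^c/p^0) = (e^a_b/p^0)\bigl[\eta^{cb} - (\p^c/p^0)(\p^b/p^0)\bigr].
\]
Differentiating the inductive form via the product rule therefore produces factors of $e^a_b/(p^0)^{|A|+2}$ times polynomials in $\p/p^0$; absorbing the new $e^a_b$ into $e^A_B$ to form $e^{A'}_{B'}$ for $A'=(A,a)$ closes the induction.

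For the second identity, the base case $|A|=0$ holds with $\mathcal{P}_0\equiv 1$. For the inductive step, the key computation is $\partial^a q^0 = 0$ together with
\[
\partial^a s = 2(p^a q^0/p^0 - q^a) = 2e^a_b q^0(\p^b/p^0 - \q^b/q^0),
\]
which yields
\[
\partial^a(q^0/s)^i = -2i\,e^a_b\,(q^0/s)^{i+1}(\p^b/p^0-\q^b/q^0),\qquad \partial^a(1/\sqrt{s}) = -(e^a_b/\sqrt{s})(q^0/s)(\p^b/p^0-\q^b/q^0).
\]
Differentiate each summand $(q^0/s)^i(1/p^0)^{|A|-i}\mathcal{P}_i$ of the assumed sum (and the common $1/\sqrt{s}$ prefactor) by the product rule. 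Contributions from $\partial^a(1/\sqrt{s})$ and $\partial^a(q^0/s)^i$ raise the summation index from $i$ to $i+1$ while keeping the exponent of $1/p^0$ equal to $|A|-i = (|A|+1)-(i+1)$, so they land at new index $j=i+1$. Contributions from $\partial^a(1/p^0)^{|A|-i}$ and $\partial^a\mathcal{P}_i$ (whose derivative follows the same mechanism as in the first identity) keep $i$ fixed and raise the exponent of $1/p^0$ by one, so they land at new index $j=i$. Regrouping over $j\in\{0,\ldots,|A|+1\}$ and absorbing the new $e^a_b$ factor into $e^A_C$ to form $e^{A'}_{C'}$ yields the claimed form at level $|A|+1$.

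The main obstacle is purely bookkeeping: one must verify that the exponent pattern $(q^0/s)^j(1/p^0)^{|A'|-j}$ with $j\in\{0,\ldots,|A'|\}$ is preserved exactly after each differentiation, and that the residual coefficient is a \emph{polynomial} in $\p/p^0, \q/q^0$ rather than a rational function. Both points follow from the systematic substitution $\p^b = (\p^b/p^0)\cdot p^0$ applied every time a bare $\p^b$ appears on the right hand side of $\partial^a p^0$ or $\partial^a s$; this trades one bare $\p^b$ for one extra power of $p^0$ in the prefactor together with the bounded factor $\p^b/p^0$, so the cascade of derivatives never leaks out of the stated form.
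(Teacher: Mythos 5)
The paper itself does not give a proof of this lemma; it simply refers to \cite{LN3}. Your inductive argument is correct and is the natural one: working through the orthonormal frame, using $\partial^a=e^a_b\hat\partial^b$, and systematically trading each bare momentum component $\p^b$ (resp.\ $\q^b$) that arises for the bounded ratio $\p^b/p^0$ (resp.\ $\q^b/q^0$) times an explicit power of $p^0$ (resp.\ $q^0$, via the $q^0/s$ factor). The bookkeeping in the second identity is handled carefully: derivatives of $1/\sqrt{s}$ and $(q^0/s)^i$ shift the summation index $i\mapsto i+1$ while preserving the exponent of $1/p^0$, and derivatives of $(1/p^0)^{|A|-i}$ and of $\mathcal{P}_i$ keep $i$ fixed while increasing that exponent by one, so the pattern $(q^0/s)^j(1/p^0)^{|A'|-j}$, $0\le j\le|A'|$, is reproduced exactly. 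One small point you leave implicit (as does the lemma statement with its phrase about $B$ and $C$ being ``summed with the polynomials'') is that the resulting polynomial depends on the new frame index $b$ absorbed into $B'=(B,b)$, i.e.\ really $\mathcal{P}=\mathcal{P}_{B'}$, but this is consistent with the statement and does not affect the argument. Since the paper's proof is by external reference, I cannot compare approaches directly, but given the structure of the claim, this is almost certainly the argument in \cite{LN3} as well.
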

\begin{proof}
We refer to \cite{LN3} for the proof.
\end{proof}

\begin{lemma}\label{Lem partial p'}
Let $p_*'$ and $q_*'$ be post-collision momenta for given $p_*$ and $q_*$. For a multi-index $A\neq 0$, we have
\[
|\partial^Ap'_*|+|\partial^Aq'_*|\leq C\max_{a,b}|(e^{-1})^a_b|(\max_{c,d}|e^c_d|)^{|A|}(q^0)^{|A|+4},
\]
where $C$ is a positive constant.
\end{lemma}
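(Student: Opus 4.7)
The plan is to combine the multi-index identity $\partial^A=e^A_B\hat\partial^B$, the explicit orthonormal-frame formula for the post-collision momenta, and the derivative bounds of Lemma~\ref{Lem 1/p^0}. First, I would pass to the orthonormal frame: since $e^a_b$ and its inverse depend only on $t$, they commute with the momentum derivatives $\hat\partial^b$, and writing $p'_k=(e^{-1})^\mu_k\hat{p}'_\mu$ together with $\partial^a=e^a_c\hat\partial^c$ gives
\[
\partial^A p'_k=e^A_B\,(e^{-1})^\mu_k\,\hat\partial^B\hat{p}'_\mu.
\]
This extracts the factors $(\max|e^c_d|)^{|A|}$ and $\max|(e^{-1})^a_b|$, reducing the claim to the frame-independent bound $|\hat\partial^B\hat{p}'_\mu|\leq C(q^0)^{|A|+4}$ uniformly in $\omega\in\bbs^2$.

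Next I would apply Leibniz to the orthonormal-frame expression for $\hat{p}'_\mu$, which is the standard special-relativistic formula in the variables $\hat{p},\hat{q},\omega,p^0,q^0,n^0,\sqrt{s}$ (see the Appendix of \cite{LN2}). Because $\hat{q}$ and $q^0$ do not depend on $\hat{p}$, each term in the expansion is a product of derivatives of polynomial factors in $\hat{p}$ and $\hat{n}$, of $1/p^0$, of $1/\sqrt{s}$, and of $1/(n^0+\sqrt{s})$. The first two kinds are controlled directly by Lemma~\ref{Lem 1/p^0}; the last kind is handled by repeated application of the quotient rule, using $\hat\partial^b p^0=\hat{p}^b/p^0$ and a derivative of $\sqrt{s}$ whose structure is dual to that of $1/\sqrt{s}$ in Lemma~\ref{Lem 1/p^0}. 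Every inverse factor is bounded by a constant via $\sqrt{s}\geq 2$, $p^0\geq 1$, and $n^0+\sqrt{s}\geq 2$, while each polynomial factor $|\hat{p}|$ or $|\hat{n}|$ is bounded by $p^0+q^0\leq 2p^0q^0$ (valid since $p^0,q^0\geq 1$).

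Collecting these estimates yields a joint polynomial bound of total degree at most $|A|+4$ in $p^0$ and $q^0$; the reduction to a pure power of $q^0$ follows by absorbing the residual $p^0$ factors through a term-by-term inspection of the cancellations between the $p^0$ powers appearing in the numerators and the $1/p^0$ factors arising from $\hat\partial^b[1/p^0]$ and from the derivative of $\sqrt{s}$. The bound on $|\partial^A q'_*|$ is then immediate from $q'_k=n_k-p'_k$, since derivatives of $n_k=p_k+q_k$ beyond the first one vanish. The main obstacle is this final bookkeeping step: the explicit formula for $p'_k$ has three summands in the first parenthesized factor and two in the second, each with its own denominator structure, and one must track how the various $p^0$ and $\sqrt{s}$ factors compensate across all cross-products after Leibniz. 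Since the orthonormal-frame formulas are identical to those in the Bianchi~I setting, this computation transfers verbatim from \cite{LN3}.
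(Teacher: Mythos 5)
Your proposal is correct and follows essentially the same route as the paper: the paper's proof of this lemma is a citation to \cite{LN3}, and your sketch (orthonormal-frame reduction via $\partial^A=e^A_B\hat\partial^B$ and $p'_k=(e^{-1})^j_k\hat p'_j$, Leibniz on the special-relativistic post-collision formula, Lemma~\ref{Lem 1/p^0} for the inverse factors, and final cancellation bookkeeping for the $p^0$ powers) is precisely the structure of the argument carried out there, which you also explicitly defer to. The only cosmetic quibble is that the index in $(e^{-1})^\mu_k$ is effectively spatial since $e_0=E_0$, so a Latin index would be cleaner, but this does not affect the argument.
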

\begin{proof}
We refer to \cite{LN3} for the proof.
\end{proof}

\begin{lemma}\label{Lem pp}
Let $\mathcal{I}$ be a multi-index with $|\mathcal{I}|=m\geq 1$. Then we have
\[
\bigg|\partial^{\mathcal I}\bigg[\frac{p^ap_b}{p^0}\bigg]\bigg|\leq Ce^{-m\gamma t},
\]
where $C$ is a positive constant.
\end{lemma}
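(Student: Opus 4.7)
The plan is to apply the Leibniz rule to $\partial^{\mathcal I}[p^a p_b/p^0]$, estimate each resulting factor using Lemma \ref{Lem p}, Lemma \ref{Lem 1/p^0}, and the asymptotic estimates of Section \ref{Sec AM}, and then verify that the powers of $e^{\gamma t}$ collected from the various sources combine to $e^{-m\gamma t}$. The cancellation that makes this work is between the $e^{-\gamma t}$ frame factor carried by $p^a = e^a_c \hat p^c$ and the $e^{\gamma t}$ inverse-frame factor carried by $p_b = \hat p_d (e^{-1})^d_b$; their product is comparable to $|\hat p|^2 \le (p^0)^2$ and carries no net time dependence.

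First I would use Lemma \ref{Lem p} to record that $\partial^c p^a = \chi^{ca}$ and $\partial^c p_b = \delta^c_b$ are both constant in momentum. Consequently the Leibniz expansion contains only finitely many nonzero contributions: $p^a$ absorbs zero or one derivative, $p_b$ absorbs zero or one, and the remaining $r \in \{m-2,\,m-1,\,m\}$ derivatives act on $1/p^0$. For this last piece I would apply Lemma \ref{Lem 1/p^0}, which together with the frame bound $|e^A_B| \le C e^{-|A|\gamma t}$ from Section \ref{Sec AM} and the fact that the polynomial $\mathcal P$ is evaluated at the bounded argument $\hat p/p^0$ yields $|\partial^A[1/p^0]| \le C e^{-|A|\gamma t}/(p^0)^{|A|+1}$.

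Checking the three cases is then bookkeeping. An undifferentiated $p^a$ is bounded by $|p^a| \le C e^{-\gamma t}|\hat p| \le C e^{-\gamma t} p^0$, an undifferentiated $p_b$ by $|p_b| \le C e^{\gamma t} p^0$ via the frame bounds of Section \ref{Sec AM} and $|\hat p| \le p^0$; a differentiated $p^a$ produces $\chi^{ca}$, bounded by $C e^{-2\gamma t}$; and a differentiated $p_b$ produces a Kronecker delta. Multiplying out case by case, the exponents in $t$ always sum to $-m\gamma$ and the surplus powers of $p^0$ collapse to at worst $(p^0)^{-(m-1)} \le 1$ since $p^0 \ge 1$ and $m \ge 1$. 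The sub-case in which both $p^a$ and $p_b$ are differentiated requires $m \ge 2$, which is consistent because for $m = 1$ at most one of them can be hit.

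The principal difficulty, which is essentially bookkeeping rather than a conceptual hurdle, is tracking the powers of $e^{\gamma t}$ from four simultaneous sources: the frame components appearing through Lemma \ref{Lem 1/p^0}, the inverse-frame components hidden inside $p_b$, the metric $\chi^{ab}$ produced whenever a derivative hits $p^a$, and the $p^0$ denominators whose growth must be weighed against the $|\hat p|$ factors in the numerators. The crucial observation is the cancellation between the frame factor of $p^a$ and the inverse-frame factor of $p_b$; without it one would obtain a spurious $e^{\gamma t}$ in the undifferentiated case and miss the claimed bound.
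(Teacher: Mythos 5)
Your proof is correct and rests on the same ingredients as the paper's: Lemma~\ref{Lem 1/p^0} for derivatives of $1/p^0$, the frame and inverse-frame asymptotics of Section~\ref{Sec AM}, and the crucial cancellation $e^{-\gamma t}\cdot e^{\gamma t}$ between the frame factor carried by $p^a$ and the inverse-frame factor carried by $p_b$. The organisation differs somewhat: the paper factors out all frame components at the outset, writing
\[
\partial^{\mathcal I}\bigg[\frac{p^ap_b}{p^0}\bigg]=e^a_c(e^{-1})^d_b\,e^A_B\,\hat{\partial}^B\bigg[\frac{\hat{p}^c\hat{p}_d}{p^0}\bigg],
\]
and then performs a single two-factor Leibniz expansion entirely in the orthonormal frame, obtaining the uniform estimate $|\hat{\partial}^B[\hat{p}^c\hat{p}_d/p^0]|\leq C(p^0)^{1-|B|}\leq C$ in one stroke before reading off $e^{-m\gamma t}$ from the prefactor $e^a_c(e^{-1})^d_b e^A_B$. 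You instead apply Leibniz in the left-invariant coordinates to the three-factor product $p^a\cdot p_b\cdot(1/p^0)$, use Lemma~\ref{Lem p} to observe that $p^a$ and $p_b$ each absorb at most one derivative, and then check the exponent budget case by case. Your verification does close in every case; the paper's version avoids the explicit casework at the cost of the preliminary frame change, while yours is the more direct Leibniz computation. One detail worth underscoring is that the estimate $|p_b|\leq Ce^{\gamma t}p^0$ you adopt for undifferentiated $p_b$, via $p_b=(e^{-1})^d_b\hat{p}_d$, is the right one to pair with the $e^{-\gamma t}$ from $p^a$; the trivial bound $|p_b|\leq\langle p_*\rangle$ would not close the exponent count.
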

\begin{proof}
We use the orthonormal frame approach. Let us write $p^a=e^a_c \p^c$ and $p_b=(e^{-1})^d_b \p_d$, and $A$ be a multi-index of the first type satisfying $\partial^A=\partial^{\mathcal I}$. The quantity on the left hand side is written as
\[
\partial^{\mathcal I}\bigg[\frac{p^ap_b}{p^0}\bigg]=e^a_c(e^{-1})^d_b\partial^A\bigg[\frac{\p^a\p_b}{p^0}\bigg]=e^a_c(e^{-1})^d_be^A_B\hat{\partial}^B\bigg[\frac{\p^a\p_b}{p^0}\bigg],
\]
where $B$ is a multi-index such that $|B|=|A|$. For each $B$ let $\mathcal{J}$ be the multi-index of the second type satisfying $\hat{\partial}^{\mathcal J}=\hat{\partial}^B$, and apply the general Leibniz rule to have
\[
\hat{\partial}^B\bigg[\frac{\p^a\p_b}{p^0}\bigg]=\hat{\partial}^{\mathcal{J}}\bigg[\frac{\p^a\p_b}{p^0}\bigg]=\sum_{\substack{\mathcal{J}=\mathcal{K}+\mathcal{L}\\ |\mathcal{K}|\leq 2}}\binom{\mathcal J}{\mathcal K}\hat{\partial}^{\mathcal K}\Big[\p^a\p_b\Big]\hat{\partial}^{\mathcal L}\bigg[\frac{1}{p^0}\bigg].
\]
The quantity $\hat{\partial}^{\mathcal K}\Big[\p^a\p_b\Big]$ for $|{\mathcal K}|\leq 2$ is estimated as follows:
\[
\Big|\hat{\partial}^{\mathcal K}\Big[\p^a\p_b\Big]\Big|\leq C(p^0)^{2-|{\mathcal K}|},
\]
where we used the fact that $\p^a$ and $\p_b$ are bounded by $p^0$. The estimate of high order derivatives of $1/p^0$ is given in Lemma \ref{Lem 1/p^0}, i.e.
\[
\bigg|\hat{\partial}^{\mathcal L}\bigg[\frac{1}{p^0}\bigg]\bigg|\leq \frac{C}{(p^0)^{|\mathcal{L}|+1}}.
\]
Combine the two estimates to obtain
\[
\bigg|\hat{\partial}^B\bigg[\frac{\p^a\p_b}{p^0}\bigg]\bigg|\leq C (p^0)^{1-|B|}.
\]
Note that $|\mathcal{I}|=|A|=|B|$ and $|\mathcal{I}|\geq 1$, hence the above quantity is bounded. We conclude that
\[
\bigg|\partial^{\mathcal I}\bigg[\frac{p^ap_b}{p^0}\bigg]\bigg|\leq C(\max_{a,b}|e^a_b|)^{1+|\mathcal{I}|}(\max_{c,d}|(e^{-1})^c_d|),
\]
and use the assumptions in Section \ref{Sec AM} to obtain the desired result.
\end{proof}

\begin{lemma}\label{Lem f}
Let $f$ be a classical solution to the Boltzmann equation \eqref{Boltzmann} with initial data $f(t_0)$. If $\|f(t_0)\|_k$ is sufficienly small, then we have
\[
\sup_{t\geq t_0}\|f(t)\|_k \leq C\|f(t_0)\|_k,
\]
where $C$ is a positive constant.
\end{lemma}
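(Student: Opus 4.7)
The plan is to run the standard energy method in the weighted $L^2$-norm $\|\cdot\|_k$ and derive a differential inequality for $\|f(t)\|_k^2$ whose nonlinear forcing decays exponentially at late times. Concretely, I multiply \eqref{Boltzmann} by $2f\langle p_*\rangle^{2k}e^{p^0}$, integrate over $\bbr^3_{p_*}$, and write the result as
\begin{align*}
\frac{d}{dt}\|f(t)\|_k^2 = I + J + K,
\end{align*}
where $I$ collects the time derivative of the weight $e^{p^0(t)}$, $J$ is the transport piece from $\frac{1}{p^0}C^c_{ba}p_cp^b\partial^af$ (after using $2f\partial^af=\partial^a(f^2)$ and integrating by parts in $p_a$), and $K$ comes from $Q(f,f)$. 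The goal is to bound these by $Ce^{-\gamma t}\|f\|_k^2 + Ce^{-3\gamma t}\|f\|_k^3$ and close by a bootstrap.

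For $I$, Lemma \ref{Lem p} gives $\partial_tp^0 = -(\sigma^{ab}+H\chi^{ab})p_ap_b/p^0$. The assumption $F\le 1/4$ from Section \ref{Sec AM} yields $\sigma^{ab}\sigma_{ab}\le H^2$, and the matrix Cauchy--Schwarz bound $|\sigma^{ab}p_ap_b|\le (\sigma^{ab}\sigma_{ab})^{1/2}\chi^{cd}p_cp_d$ then gives $\partial_tp^0\le 0$, so $I\le 0$. For $J$, the key point is that the antisymmetry $C^c_{ab}=-C^c_{ba}$, combined with the symmetry of $\chi^{ab}$ and of $p^ap^b$, annihilates the three potentially dangerous contributions arising from $\partial^a p^b=\chi^{ab}$, from $\partial^a(1/p^0)=-p^a/(p^0)^3$, and from $\partial^a e^{p^0}=e^{p^0}p^a/p^0$. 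Only two residual pieces survive: a trace piece $C^a_{ba}p^b/p^0$ and a weight piece $2kC^c_{ba}p_cp^b\eta^{ad}p_d\langle p_*\rangle^{-2}/p^0$ coming from $\partial^a\langle p_*\rangle^{2k}$. Using $|\chi^{ab}|\le Ce^{-2\gamma t}$ and $p^0\ge C^{-1/2}e^{-\gamma t}|p_*|$ (Section \ref{Sec AM}) to compare with $p^0\ge 1$, both pieces are $O(e^{-\gamma t})$ uniformly in $p_*$, whence $|J|\le Ce^{-\gamma t}\|f(t)\|_k^2$.

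For $K$, I decompose $Q=Q^+-Q^-$. The loss part contributes $-2\int \nu e^{p^0}\langle p_*\rangle^{2k}f^2\,dp_*\le 0$ with collision frequency $\nu\ge 0$, and may simply be dropped. For the gain part, the key ingredients are Lemma \ref{Lem pp'q'}, which transfers $\langle p_*\rangle^k\le C\langle p'_*\rangle^k\langle q'_*\rangle^k$, and the energy conservation $p^0+q^0=p'^0+q'^0$ together with $q^0\ge 1$, giving $e^{p^0/2}\le e^{p'^0/2}e^{q'^0/2}$. Setting $\phi(r_*):=e^{r^0/2}\langle r_*\rangle^k|f(r_*)|$ (so $\|\phi\|_{L^2}=\|f\|_k$), these combine to the pointwise bound
\begin{align*}
e^{p^0}\langle p_*\rangle^{2k}|f(p_*)||f(p'_*)||f(q'_*)|\le C\phi(p_*)\phi(p'_*)\phi(q'_*).
\end{align*}
Cauchy--Schwarz in $(p_*,q_*,\omega)$ followed by the measure-preserving change of variables $(p_*,q_*)\leftrightarrow(p'_*,q'_*)$ at fixed $\omega$ on the collision manifold, as carried out in \cite{LN2}, then yields $|K|\le C(\det\chi)^{-1/2}\|f(t)\|_k^3$, and the asymptotics $\det\chi=O(e^{6\gamma t})$ recorded after Proposition \ref{Prop E} give $|K|\le Ce^{-3\gamma t}\|f(t)\|_k^3$.

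Collecting,
\begin{align*}
\frac{d}{dt}\|f(t)\|_k^2\le Ce^{-\gamma t}\|f(t)\|_k^2 + Ce^{-3\gamma t}\|f(t)\|_k^3.
\end{align*}
Since $\int_{t_0}^\infty e^{-\gamma s}\,ds$ and $\int_{t_0}^\infty e^{-3\gamma s}\,ds$ are finite, a continuity/bootstrap argument, using smallness of $\|f(t_0)\|_k$ to keep the cubic term dominated by the quadratic one, closes the estimate and gives $\|f(t)\|_k\le C\|f(t_0)\|_k$ uniformly in $t\ge t_0$. The main technical obstacle is the gain-term step: correctly distributing the $e^{p^0}$ weight among the three factors $f(p_*),f(p'_*),f(q'_*)$ via energy conservation, and verifying that the collision change of variables produces the favourable negative power of $\det\chi$ rather than a divergent geometric factor. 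The antisymmetry cancellations in $J$ are algebraically straightforward but essential; without them one would generate terms of order $H$ rather than $O(e^{-\gamma t})$, which would not be integrable at infinity.
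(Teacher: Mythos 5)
Your proposal tracks the paper's proof very closely: same weighted $L^2$ energy functional, same split into the $\partial_t p^0$ weight term (sign-favourable via $F\le 1/4$ and \eqref{est sigma}), same integration by parts in the transport term with the antisymmetry of the structure constants killing the dangerous contributions and leaving the trace and weight pieces of size $O(e^{-\gamma t})$, same cubic estimate for the collision term, and the same Riccati-type ODE closure under a smallness assumption. Two small points of difference are worth flagging. First, you claim the collision estimate yields $|K|\le C(\det\chi)^{-1/2}\|f\|_k^3$ giving $e^{-3\gamma t}$, whereas the paper quotes $C(\det\chi)^{-1/4}\|f\|_k^3$ from estimate (9) of \cite{LN3}, giving $e^{-\frac{3}{2}\gamma t}$; the paper's exponent is the correct one coming from the Cauchy--Schwarz step (only half the integrand is transferred to primed variables), but since both decay exponentially the bootstrap closes identically, so this is a harmless over-claim. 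Second, you drop the loss term outright on the grounds that the collision frequency $\nu\ge 0$; this requires $f\ge 0$, which at this stage (Lemma \ref{Lem f}) has not yet been established --- non-negativity is only arranged later in Proposition \ref{Prop B} via the choice of iteration. The paper avoids this by bounding $Q$ (gain and loss together) with a single cubic estimate, and you would want to do the same, since the loss term in any case obeys the identical $C(\det\chi)^{-1/4}\|f\|_k^3$ bound by the same Cauchy--Schwarz argument. Neither point affects the validity of the final inequality or the smallness argument.
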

\begin{proof}
Let us first consider the left hand side of the Boltzmann equation \eqref{Boltzmann}. Multiplying the left hand side by $\langle p_*\rangle^{2k}e^{p^0}f$ and integrating it with respect to $p_*$ we obtain
\begin{align*}
&\int\langle p_*\rangle^{2k}e^{p^0}f\bigg(\frac{\partial f}{\partial t}+\frac{1}{p^0}C^c_{ba}p^bp_c\frac{\partial f}{\partial p_a}\bigg)dp_*\\
&=\frac12\int\langle p_*\rangle^{2k}e^{p^0}\frac{\partial (f^2)}{\partial t}dp_*+\frac12\int\langle p_*\rangle^{2k}e^{p^0}\frac{1}{p^0}C^c_{ba}p^bp_c\frac{\partial (f^2)}{\partial p_a}dp_*.
\end{align*}
The first quantity on the right side above is written as
\begin{align*}
\frac12\int\langle p_*\rangle^{2k}e^{p^0}\frac{\partial (f^2)}{\partial t}dp_*
=\frac12\frac{d}{dt}\|f(t)\|_k^2+\frac12\int\langle p_*\rangle^{2k}e^{p^0}\frac{(\sigma^{ab}+H\chi^{ab})p_ap_b}{p^0} f^2 dp_*,
\end{align*}
where we used Lemma \ref{Lem p}, and note that the second quantity above is non-negative since
\begin{align}
|\sigma^{ab}p_ap_b|\leq (\sigma^{ab}\sigma_{ab})^{\frac12}(\chi^{cd}p_cp_d)=2HF^{\frac12}(\chi^{ab}p_ap_b)\leq H(\chi^{ab}p_ap_b).\label{est sigma}
\end{align}
We also have
\begin{align*}
&\frac12\int\langle p_*\rangle^{2k}e^{p^0}\frac{1}{p^0}C^c_{ba}p^bp_c\frac{\partial (f^2)}{\partial p_a}dp_*\\
&=-\frac12\int\Big(2k\eta^{ad}p_d\langle p_*\rangle^{2k-2}\Big)e^{p^0}\frac{1}{p^0}C^c_{ba}p^bp_cf^2 dp_*
-\frac12\int\langle p_*\rangle^{2k}e^{p^0}\frac{p^a}{(p^0)^2}C^c_{ba}p^bp_c f^2 dp_*\allowdisplaybreaks\\
&\quad +\frac12\int\langle p_*\rangle^{2k}e^{p^0}\frac{p^a}{(p^0)^3}C^c_{ba}p^bp_c f^2 dp_*
-\frac12\int\langle p_*\rangle^{2k}e^{p^0}\frac{1}{p^0}C^c_{ba}\chi^{ab}p_c f^2 dp_*\\
&\quad -\frac12\int\langle p_*\rangle^{2k}e^{p^0}\frac{1}{p^0}C^c_{ba}p^b\delta^a_c f^2 dp_*\allowdisplaybreaks\\
&=-\frac12\int\Big(2k\eta^{ad}p_d\langle p_*\rangle^{2k-2}\Big)e^{p^0}\frac{1}{p^0}C^c_{ba}p^bp_cf^2 dp_*
-\frac12\int\langle p_*\rangle^{2k}e^{p^0}\frac{1}{p^0}C^a_{ba}p^b f^2 dp_*,
\end{align*}
where we used the antisymmetry property of structure constants. Since $p_a$ is bounded by $\langle p_*\rangle$, and $p^a$ can be written as $e^a_b\p^b$, which is bounded by $Cp^0e^{-\gamma t}$, the above two integrals are estimated as follows:
\[
\bigg|\int\Big(2k\eta^{ad}p_d\langle p_*\rangle^{2k-2}\Big)e^{p^0}\frac{1}{p^0}C^c_{ba}p^bp_cf^2 dp_*\bigg|
+\bigg|\int\langle p_*\rangle^{2k}e^{p^0}\frac{1}{p^0}C^a_{ba}p^b f^2 dp_*\bigg|\leq Ce^{-\gamma t}\|f(t)\|_k^2.
\]
We conclude that the left hand side of the Boltzmann equation \eqref{Boltzmann} is estimated as
\begin{align}\label{est f left}
\int\langle p_*\rangle^{2k}e^{p^0}f\bigg(\frac{\partial f}{\partial t}+\frac{1}{p^0}C^c_{ba}p^bp_c\frac{\partial f}{\partial p_a}\bigg)dp_*\geq \frac12\frac{d}{dt}\|f(t)\|_k^2-Ce^{-\gamma t}\|f(t)\|^2_k.
\end{align}
The estimate of the right hand side of the Boltzmann equation \eqref{Boltzmann} is the same as the estimate given in \cite{LN3} where the Bianchi I case is studied. We also refer to \cite{LN2} for more details. Following the estimates (9) of \cite{LN3} we obtain
\begin{align}\label{est f right}
\bigg|\int \langle p_*\rangle^{2k} e^{p^0}f(t,p_*)Q(f,f)(t,p_*)dp_*\bigg|\leq C (\det\chi)^{-\frac14}\|f(t)\|_k^3.
\end{align}
Combine the estimates \eqref{est f left} and \eqref{est f right} to obtain
\begin{align}\label{est f}
\frac12\frac{d}{dt}\|f(t)\|_k^2\leq Ce^{-\gamma t}\|f(t)\|_k^2+Ce^{-\frac32\gamma t}\|f(t)\|_k^3,
\end{align}
where we used the assumptions of Section \ref{Sec AM}. 

To estimate the differential inequality \eqref{est f}, let us write $u(t)=\|f(t)\|_k^2$ for simplicity so that the inequality is written as
\[
-\frac{d}{dt}\bigg[\frac{1}{\sqrt{u}}\bigg]\leq Ce^{-\gamma t}\frac{1}{\sqrt{u}}+Ce^{-\frac32\gamma t},
\]
as long as $u$ does not vanish. Multiplying a suitable integrating factor we have
\begin{align*}
\frac{1}{\sqrt{u(t_0)}} &\leq \frac{\exp \int_{t_0}^t Ce^{-\gamma s}ds}{\sqrt{u(t)}}+C\int_{t_0}^t e^{-\frac32\gamma s}\exp\int_{t_0}^s Ce^{-\gamma \tau}d\tau ds\leq \frac{c_1}{\sqrt{u(t)}}+c_2,
\end{align*}
where $c_1$ and $c_2$ are finite numbers. We conclude that if $u(t_0)$ is small such as $u(t_0)< c_2^{-2}$, then $u(t)$ is bounded by
\[
u(t)\leq \bigg(\frac{c_1\sqrt{u(t_0)}}{1-c_2\sqrt{u(t_0)}}\bigg)^2,
\]
and this proves the desired result.
\end{proof}

\begin{lemma}\label{Lem partial f}
Let $f$ be a classical solution to the Boltzmann equation \eqref{Boltzmann} with initial data $f(t_0)$. If $\|f(t_0)\|_{k,N}$ is sufficiently small, then we have
\[
\sup_{t\geq t_0} \|f(t)\|_{k,N}\leq C\|f(t_0)\|_{k,N},
\]
where $C$ is a positive constant.
\end{lemma}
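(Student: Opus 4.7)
The plan is to adapt the energy estimate of Lemma \ref{Lem f} to all mixed derivatives up to order $N$. I would apply $\partial^{\mathcal I}$ with $|\mathcal I|\leq N$ to the Boltzmann equation \eqref{Boltzmann}, multiply by $\langle p_*\rangle^{2k}e^{p^0}\partial^{\mathcal I}f$, integrate in $p_*$, and sum over $\mathcal I$, aiming at a differential inequality of the same type as \eqref{est f},
\[
\tfrac{1}{2}\tfrac{d}{dt}\|f(t)\|_{k,N}^2 \leq Ce^{-\gamma t}\|f(t)\|_{k,N}^2+Ce^{-\frac{3}{2}\gamma t}\|f(t)\|_{k,N}^3,
\]
which is then closed by exactly the same integrating-factor argument used at the end of Lemma \ref{Lem f}, given the smallness of $\|f(t_0)\|_{k,N}$.

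For the transport side, the principal contribution reproduces $\tfrac{1}{2}\tfrac{d}{dt}\|\partial^{\mathcal I}f\|_k^2$, a non-negative $\partial_t p^0$ term (still non-negative by \eqref{est sigma} since $F\leq 1/4$), and an $O(e^{-\gamma t})$ structure-constant remainder, exactly as in Lemma \ref{Lem f}. The new commutator $[\partial^{\mathcal I},\tfrac{1}{p^0}C^c_{ba}p^b p_c\,\partial^a]f$ is expanded by the Leibniz rule into terms involving $\partial^{\mathcal K}[p^b p_c/p^0]$ with $|\mathcal K|\geq 1$ multiplied by strictly lower-order derivatives of $f$. Lemma \ref{Lem pp} bounds each such factor by $Ce^{-|\mathcal K|\gamma t}$, and the resulting integrals are absorbed into $Ce^{-\gamma t}\|f\|_{k,N}^2$.

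For the collision side, I would apply $\partial^{\mathcal I}$ to $Q(f,f)$ and distribute the derivatives via the general Leibniz rule among $1/(p^0 q^0\sqrt s)$ (controlled by Lemma \ref{Lem 1/p^0}) and the gain/loss factor $f(p_*')f(q_*')-f(p_*)f(q_*)$. For the gain term, the chain rule together with Lemma \ref{Lem partial p'} produces sums of $(\partial^{\mathcal J}f)(p_*')(\partial^{\mathcal J'}f)(q_*')$ multiplied by products of derivatives of the post-collision momenta; Lemma \ref{Lem pp'q'} then allows the weight $\langle p_*\rangle^{2k}$ to be split as $C\langle p_*'\rangle^{2k}\langle q_*'\rangle^{2k}$, while the conservation identity $p^0+q^0=p^{\prime 0}+q^{\prime 0}$ in an orthonormal frame lets us write $e^{p^0}=e^{p^{\prime 0}}e^{q^{\prime 0}}e^{-q^0}$. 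After a change of variables and a Cauchy--Schwarz argument arranged exactly as in the estimates (9) of \cite{LN3}, the collision integral is bounded by a cubic expression in $\|f\|_{k,N}$ with prefactor $(\det\chi)^{-1/2}=O(e^{-3\gamma t})$, producing the $Ce^{-\frac{3}{2}\gamma t}\|f\|_{k,N}^3$ contribution.

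The main obstacle is that Lemma \ref{Lem partial p'} costs a factor $(q^0)^{|\mathcal J|+4}$ per order of differentiation of the post-collision momenta, and these powers must be uniformly absorbed for $|\mathcal J|\leq N$. The metric assumption of Section \ref{Sec AM} gives $q^0\leq C(1+e^{-\gamma t}\langle q_*\rangle)$, so these polynomials amount to polynomial growth in $\langle q_*\rangle$ accompanied by a favourable time decay, and they are absorbed by the weight $\langle q_*\rangle^{2k}$ together with the residual exponential $e^{-q^0/2}$ from the gain term. This is the same delicate absorption already carried out at order $N=0$ in \cite{LN2,LN3}; once the $\mathcal I$-dependence of all these bounds is tracked uniformly, the desired differential inequality closes and the conclusion follows exactly as in Lemma \ref{Lem f} applied to $u(t)=\|f(t)\|_{k,N}^2$.
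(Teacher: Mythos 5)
Your proposal follows the paper's proof essentially step by step: apply $\partial^{\mathcal I}$, split the transport operator into principal terms (handled as in Lemma \ref{Lem f}) plus a Leibniz commutator controlled via Lemma \ref{Lem pp}, bound the differentiated collision integral using Lemmas \ref{Lem pp'q'}, \ref{Lem 1/p^0}, \ref{Lem partial p'} and the change-of-variables/Cauchy--Schwarz argument of \cite{LN3}, and close with the same ODE argument. One bookkeeping slip: the final collision bound carries a prefactor $(\det\chi)^{-1/4}=O(e^{-3\gamma t/2})$, not $(\det\chi)^{-1/2}=O(e^{-3\gamma t})$, because the Cauchy--Schwarz/change-of-variables step recovers a factor $(\det\chi)^{1/4}$; your stated decay rate $e^{-\frac32\gamma t}$ is the correct one, and it matches the former, not the latter.
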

\begin{proof}
Let ${\mathcal I}\neq 0$ be a multi-index of the second type, and take the derivative $\partial^{\mathcal I}$ on the left hand side of the Boltzmann equation \eqref{Boltzmann} to have
\begin{align*}
&\partial^{\mathcal I}\bigg[\frac{\partial f}{\partial t}+\frac{1}{p^0}C^c_{ba}p^bp_c\frac{\partial f}{\partial p_a}\bigg]\\
&=\frac{\partial (\partial^{\mathcal I}f)}{\partial t}+\frac{1}{p^0}C^c_{ba}p^bp_c\frac{\partial (\partial^{\mathcal I}f)}{\partial p_a}
+\sum_{\substack{\mathcal{ I=J+K}\\ {|\mathcal J|\geq 1}}}\binom{\mathcal I}{\mathcal J}\partial^{\mathcal J}\bigg[\frac{1}{p^0}C^c_{ba}p^bp_c\bigg]\frac{\partial (\partial^{\mathcal K}f)}{\partial p_a}.
\end{align*}
The estimates of the first and the second quantities on the right side above are the same as the previous lemma. Multiplying the above by $\langle p_*\rangle^{2k}e^{p^0}\partial^{\mathcal I}f$ and integrating it with respect to $p_*$ we obtain from the first and the second quantities
\begin{align}
&\frac12\frac{d}{dt}\|\partial^{\mathcal I}f(t)\|_k^2+\frac12\int\langle p_*\rangle^{2k}e^{p^0}\frac{(\sigma^{ab}+H\chi^{ab})p_ap_b}{p^0} (\partial^{\mathcal I}f)^2 dp_*\nonumber\\
&\quad -\frac12\int\Big(2k\eta^{ad}p_d\langle p_*\rangle^{2k-2}\Big)e^{p^0}\frac{1}{p^0}C^c_{ba}p^bp_c(\partial^{\mathcal I}f)^2 dp_*
-\frac12\int\langle p_*\rangle^{2k}e^{p^0}\frac{1}{p^0}C^a_{ba}p^b(\partial^{\mathcal I} f)^2 dp_*\nonumber\\
&\geq \frac12\frac{d}{dt}\|\partial^{\mathcal I}f(t)\|_k^2-Ce^{-\gamma t}\|\partial^{\mathcal I}f(t)\|_k^2.\label{est partial f left 1}
\end{align}
The third quantity is estimated as
\begin{align}
&\bigg|\sum_{\substack{\mathcal{ I=J+K}\\ {|\mathcal J|\geq 1}}}\binom{\mathcal I}{\mathcal J} \int\langle p_*\rangle^{2k}e^{p^0}(\partial^{\mathcal I}f) \partial^{\mathcal J}\bigg[\frac{1}{p^0}C^c_{ba}p^bp_c\bigg]\frac{\partial (\partial^{\mathcal K}f)}{\partial p_a}dp_*\bigg|\nonumber\\
&\leq Ce^{-\gamma t} \bigg(\|\partial^{\mathcal I}f(t)\|_k^2 + \sum_{1\leq|{\mathcal L}|\leq |{\mathcal I}|}\|\partial^{\mathcal L}f(t)\|^2_k\bigg)
\leq Ce^{-\gamma t} \|f(t)\|^2_{k,|{\mathcal I}|},\label{est partial f left 2}
\end{align}
where we used Lemma \ref{Lem pp} and the fact that $|{\mathcal J}|\geq 1$ and $|{\mathcal K}|\leq |{\mathcal I}|-1$. 

We now use Lemma \ref{Lem pp'q'}, \ref{Lem 1/p^0}, and \ref{Lem partial p'} to estimate the right hand side of the Boltzmann equation. Following the estimates given in the proof of Lemma 10 of \cite{LN3} we obtain
\begin{align}
\int \langle p_*\rangle^{2k}e^{p^0}(\partial^{\mathcal I} f)\partial^{\mathcal I} Q(f,f)dp_*\leq C(\det \chi)^{-\frac14}\bigg(\sum_{|{\mathcal L}|\leq |{\mathcal I}|}\| \partial^{\mathcal L}f(t)\|_k\bigg)^3.\label{est partial f right}
\end{align}
We combine the estimates \eqref{est partial f left 1}, \eqref{est partial f left 2}, and \eqref{est partial f right} to obtain
\begin{align*}
&\frac12\frac{d}{dt}\|\partial^{\mathcal I}f(t)\|^2_k
\leq Ce^{-\gamma t}\|\partial^{\mathcal I}f(t)\|^2_k
+Ce^{-\gamma t}\|f(t)\|^2_{k,|{\mathcal I}|}
+Ce^{-\frac32\gamma t}\|f(t)\|^3_{k,|{\mathcal I}|},
\end{align*}
where we use the assumptions of Section \ref{Sec AM}. Note that the case ${\mathcal I}=0$ is given by the previous lemma. Collecting all the possible $|{\mathcal I}|\leq N$ we obtain
\begin{align}\label{est partial f}
\frac12\frac{d}{dt}\|f(t)\|^2_{k,N}\leq Ce^{-\gamma t}\|f(t)\|^2_{k,N}+Ce^{-\frac32\gamma t}\|f(t)\|^3_{k,N},
\end{align}
which is the same differential inequality with \eqref{est f}. Consequently, by the same argument as the previous lemma we obtain the desired result for small initial data.
\end{proof}

\begin{prop}\label{Prop B}
Let a spatial metric $\chi_{ab}$ be given and satisfy the assumptions of Section \ref{Sec AM}. Then, there exists $\varepsilon>0$ such that if initial data satisfies $\|f(t_0)\|_{k+1/2,N}<\varepsilon$ for $N\geq 3$, then there exists a non-negative global-in-time solution to the Boltzmann equation \eqref{Boltzmann} satisfying
\[
\sup_{t\in[t_0,\infty)}\|f(t)\|_{k,N}\leq C\varepsilon.
\]
\end{prop}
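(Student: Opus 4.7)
The plan is to combine an iterative construction with the energy estimates of Lemmas \ref{Lem f} and \ref{Lem partial f}. I would decompose the collision operator in the standard gain--loss form $Q(f,f) = Q_+(f,f) - f\, L(f)$, where
\[
L(f)(t,p_*) = (\det\chi)^{-\frac12}\iint \frac{f(t,q_*)}{p^0 q^0 \sqrt{s}}\, d\omega\, dq_* \geq 0,
\]
and set up the Picard-type iteration in which $f^{(n+1)}$ solves the linear transport equation
\[
\partial_t f^{(n+1)} + \frac{1}{p^0} C^c_{ba} p^b p_c \frac{\partial f^{(n+1)}}{\partial p_a} + f^{(n+1)} L(f^{(n)}) = Q_+(f^{(n)}, f^{(n)}),
\]
with initial datum $f^{(n+1)}(t_0) = f(t_0) \geq 0$ and $f^{(0)} \equiv 0$ (or $f(t_0)$). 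Each such equation is solved by the method of characteristics and Duhamel's formula, which immediately yields $f^{(n+1)} \geq 0$ whenever $f^{(n)} \geq 0$. Hence non-negativity is inherited by any limit.

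Next, I would run the energy calculation underlying Lemma \ref{Lem partial f} on each iterate. The only structural change is that the loss term contributes a non-negative quantity $\int \langle p_*\rangle^{2k} e^{p^0} |f^{(n+1)}|^2 L(f^{(n)}) \, dp_*$ on the left, which only helps, and the gain term is estimated exactly as in \eqref{est partial f right} but with one factor of $f^{(n)}$ and one of $f^{(n+1)}$. The resulting differential inequality has the same form as \eqref{est partial f}, and since $\int_{t_0}^\infty e^{-\gamma s}\,ds < \infty$, the continuation argument at the end of Lemma \ref{Lem f} propagates the bound
\[
\sup_{t \geq t_0}\|f^{(n)}(t)\|_{k,N} \leq C\|f(t_0)\|_{k,N}
\]
uniformly in $n$, provided $\|f(t_0)\|_{k,N} \leq \|f(t_0)\|_{k+1/2,N} < \varepsilon$ is small enough that the denominator $1 - c_2 \sqrt{u(t_0)}$ in that proof stays positive. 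The extra $1/2$-power in the weight of the initial-data hypothesis plays no role here; it is presumably reserved for the subsequent weighted Sobolev embedding that converts $\|f\|_{k+1/2,N}$ into the pointwise decay of \eqref{assumption f} needed to close the Einstein-side iteration.

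Convergence of $\{f^{(n)}\}$ follows by subtracting consecutive equations and running the same energy argument on the difference $f^{(n+1)} - f^{(n)}$, whose source is bilinear in previous differences and uniformly bounded iterates. The uniform bound $\|f^{(n)}\|_{k,N} \lesssim \varepsilon$ turns this into a contraction in $\|\cdot\|_{k,N}$ (losing one derivative if necessary) with contraction factor $O(\varepsilon)$; taking the limit produces a classical solution $f \geq 0$ satisfying the stated global bound, and the same difference estimate yields uniqueness.

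The main obstacle is that high-order momentum derivatives of the collision operator involve the post-collision momenta $p_*'$, $q_*'$, whose derivatives exhibit polynomial growth in $q^0$ as quantified by Lemmas \ref{Lem pp'q'}, \ref{Lem 1/p^0}, and \ref{Lem partial p'}. The essential point is that the exponential factor $(\det\chi)^{-1/4} = O(e^{-\frac32 \gamma t})$ coming from the de Sitter-type geometry, combined with the $e^{-\gamma t}$ arising from the commutator term \eqref{est partial f left 2} and from the transport estimate \eqref{est f left}, precisely compensates this growth and makes the differential inequality \eqref{est partial f} integrable in time. Without this interplay one would obtain a bound that grows in $t$ rather than the uniform-in-$t$ bound required for the proposition.
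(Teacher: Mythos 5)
Your proposal follows the same overall strategy as the paper: a gain--loss splitting of $Q$, a linear Picard iteration that is non-negativity preserving, the energy estimates of Lemmas~\ref{Lem f} and~\ref{Lem partial f} for the uniform bound, and a difference estimate for convergence and uniqueness. The one point where your interpretation diverges from the paper is the role of the extra $1/2$ in the weight. You assert it ``plays no role here'' and is reserved for the Sobolev embedding into the pointwise bound \eqref{assumption f}; in fact the paper's proof of Proposition~\ref{Prop B} uses the $k+\tfrac12$ norm \emph{inside} this proposition. Recall that in deriving \eqref{est partial f} the non-negative term
\[
\frac12\int\langle p_*\rangle^{2k}e^{p^0}\frac{(\sigma^{ab}+H\chi^{ab})p_ap_b}{p^0}\,(\partial^{\mathcal I}f)^2\,dp_*
\]
was simply discarded, giving a one-sided bound on $\tfrac{d}{dt}\|f(t)\|^2_{k,N}$. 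To conclude that $t\mapsto\|f(t)\|^2_{k,N}$ is genuinely differentiable (equivalently, that the solution is classical in $t$), the paper re-estimates this term from above: since $\chi^{ab}p_ap_b/p^0$ scales like $e^{-\gamma t}\langle p_*\rangle$, the upper bound costs a half power of the weight and yields the two-sided estimate
\[
\Big|\tfrac12\tfrac{d}{dt}\|f(t)\|^2_{k,N}\Big|\leq Ce^{-\gamma t}\|f(t)\|^2_{k+\frac12,N}+Ce^{-\gamma t}\|f(t)\|^2_{k,N}+Ce^{-\frac32\gamma t}\|f(t)\|^3_{k,N},
\]
whose right-hand side is controlled precisely because the initial data lies in the stronger $\|\cdot\|_{k+1/2,N}$ space. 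So the half-power is not a spectator in Proposition~\ref{Prop B}; it is what lets you bound the time derivative of the lower-weight norm, not merely its positive part. Aside from this, your argument is sound and matches the paper.
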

\begin{proof}
Local-in-time existence is proved by a standard iteration, and Lemma \ref{Lem partial f} shows that the solution is bounded globally in time with respect to the norm $\|\cdot\|_{k,N}$. This proves the global-in-time existence and the boundedness of solutions. We remark that the standard iteration will be given by
\[
\frac{\partial f_{n+1}}{\partial t}+\frac{1}{p^0}C_{ba}^cp_cp^b\frac{\partial f_{n+1}}{\partial p_a}=(\det\chi)^{-\frac12}\iint \frac{1}{p^0q^0\sqrt{s}}\Big(f_n(p_*')f_n(q_*')-f_n(p_*)f_{n+1}(q_*)\Big)d\omega dq_*,
\]
and this ensures the non-negativity of the solutions. The second term on the left hand side of the inequality \eqref{est partial f left 1} is non-negative by \eqref{est sigma} and has been ignored to obtain the estimate \eqref{est partial f}. It can be estimated again by using \eqref{est sigma} as follows:
\begin{align*}
\bigg|\int\langle p_*\rangle^{2k}e^{p^0}\frac{(\sigma^{ab}+H\chi^{ab})p_ap_b}{p^0} (\partial^{\mathcal I}f)^2 dp_*\bigg|\leq 2H\int \langle p_*\rangle^{2k}e^{p^0}\frac{\chi^{ab}p_ap_b}{p^0} (\partial^{\mathcal I}f)^2 dp_*.
\end{align*}
Note that $\chi^{ab}$ is bounded by $Ce^{-2\gamma t}$. Moreover, $p_a$ is bounded by $\langle p_*\rangle$, and $p_b/p^0=(e^{-1})^c_b\p_c/p^0$ is bounded by $Ce^{\gamma t}$, hence we have
\begin{align*}
2H\int \langle p_*\rangle^{2k}e^{p^0}\frac{\chi^{ab}p_ap_b}{p^0} (\partial^{\mathcal I}f)^2 dp_*
\leq Ce^{-\gamma t}\|\partial^{\mathcal I}f(t)\|^2_{k+\frac12}.
\end{align*}
Following the proof of Lemma \ref{Lem partial f} we obtain
\begin{align*}
\bigg|\frac12\frac{d}{dt}\|f(t)\|^2_{k,N}\bigg|
\leq Ce^{-\gamma t}\|f(t)\|^2_{k+\frac12,N}+Ce^{-\gamma t}\|f(t)\|^2_{k,N}+Ce^{-\frac32\gamma t}\|f(t)\|^3_{k,N},
\end{align*}
and the right hand side is bounded. This shows that the solution is differentiable with respect to time, and this completes the proof. 
\end{proof}

\subsection{Proof of the main theorem}\label{Sec EB}
The main theorem is proved by combining the Propositions \ref{Prop E} and \ref{Prop B} with a suitable iteration. Let $\chi_{ab}(t_0)$, $k_{ab}(t_0)$, and $f(t_0)$ be initial data of the Einstein-Boltzmann system satisfying the conditions of Theorem \ref{Thm}. Let us first define $f_0$ as
\[
f_0(t)=f(t_0).
\]
Since $\|f(t_0)\|_{k+1/2,N}$ is finite with $N\geq 3$, we have $f(t_0,p_*)\leq C\langle p_*\rangle^{-k-1/2}e^{-p^0(t_0)/2}$ for some $C>0$, and this implies that $f_0$ satisfies \eqref{assumption f}:
\[
f_0(t,p_*)\leq C\langle p_*\rangle^{-k}.
\]
Hence, by Proposition \ref{Prop E}, we obtain $\chi_1$ and $k_1$ which are global-in-time solutions to the evolution equations \eqref{evolution1}--\eqref{evolution2} satisfying the properties of Proposition \ref{Prop E}. Note that $\chi_1$ and $k_1$ satisfy the assumptions of Section \ref{Sec AM}: the Hubble variable is initially bounded as $H(t_0)\leq (6/5)^{1/2}\gamma$ and is also non-increasing, so we have $H(t)\leq (6/5)^{1/2}\gamma$ for all $t\geq t_0$, and the estimate \eqref{boundedness F} shows that $F(t)\leq 1/4$ for all $t\geq t_0$. Hence, the $\chi_1$ and $k_1$ satisfy the assumptions of Section \ref{Sec AM}, and we obtain $f_1$ which is a solution of the Boltzmann equation \eqref{Boltzmann} satisfying the properties of Proposition \ref{Prop B}. To be precise, there exists $\varepsilon>0$ such that if $\|f(t_0)\|_{k+1/2,N}<\varepsilon$, then $\|f_1(t)\|_{k,N}\leq C\varepsilon$ for all $t\geq t_0$, which guarantees for $f_1$ the property \eqref{assumption f}:
\[
f_1(t,p_*)\leq C\varepsilon\langle p_*\rangle^{-k}.
\]
Applying Proposition \ref{Prop E} again, we obtain $\chi_2$ and $k_2$ which are solutions to the evolution equations \eqref{evolution1}--\eqref{evolution2} with initial data $\chi_{ab}(t_0)$ and $k_{ab}(t_0)$. In this way, we obtain iterations $\{\chi_n\}_{n=1}^\infty$, $\{k_n\}_{n=1}^\infty$, and $\{f_n\}_{n=1}^\infty$, and consequently solutions to the coupled Einstein-Boltzmann system. The estimate of $\partial_t p^0$ in Lemma \ref{Lem p} and the estimate \eqref{est sigma} show that $p^0$ is bounded globally in time, which ensures the future geodesic completeness (see \cite{Lee} for more details). The asymptotic behaviour of $H$, $\sigma_{ab}\sigma^{ab}$, $\chi_{ab}$, and $\chi^{ab}$ are given by Proposition \ref{Prop E}, and the distribution function is non-negative and satisfies
\[
f(t,p_*)\leq C\varepsilon \langle p_*\rangle^{-k}e^{-\frac12 p^0(t)}
\]
by Proposition \ref{Prop B}. Asymptotic behaviour of the distribution function appears in an orthonormal frame such that
\[
f(t,\p)\leq C\varepsilon (1+e^{2\gamma t}|\p|^2)^{-\frac12 k}e^{-\frac12 p^0},
\]
where $p^0=\sqrt{1+|\p|^2}$ is now independent of $t$. This completes the proof of the main theorem.

\section{Conclusions and outlook}

In this paper we have shown existence and the asymptotic behaviour of forever expanding homogeneous solutions of the Einstein-Boltzmann system with a positive cosmological constant. This extends the work of \cite{Lee} where the Vlasov case was treated to the case of Boltzmann. Note however that the energy method used works well only in the case of Israel particles. It will be left as a future project to consider other types of scattering kernels. We considered spatially homogeneous spacetimes of all Bianchi types except IX, since the latter does not expand forever but recollapses. However it should be possible to obtain results for the case of Bianchi IX before the recollapse following the work of \cite{R} in the Vlasov case (cf. Chapter 26.3 of \cite{R}). Our results rely on the assumption of homogeneity. As a future project we would like to consider general solutions following the work of \cite{R}. A different approach might be to reformulate the system as a symmetric hyperbolic one and follow \cite{Ol}. Another natural generalisation would be to consider the case of a vanishing cosmological constant. There has been recent progress on that matter \cite{Fajman,R2} which in combination with our previous work \cite{LN1} could lead to results in that direction. Finally the well-posedness of solutions to the Einstein-Boltzmann system with an isotropic singularity (see \cite{Tod} for the Vlasov case) is still open and seems an interesting and challenging problem.

\section*{Acknowledgements}
This research was supported by Basic Science Research Program through the National Research Foundation of Korea (NRF) funded by the Ministry of Science, ICT \& Future Planning (NRF-2015R1C1A1A01055216). E.N. has been funded by a Juan de la Cierva research fellowship from the Spanish government and this work has been partially supported by ICMAT Severo Ochoa project SEV-2015-0554 (MINECO).  The authors wish to thank the organisers H\r{a}kan Andreasson, David Fajman and J\'{e}r\'{e}mie Joudioux of the \emph{ESI workshop on the geometric transport equations in General Relativity} for their kind invitation to present our previous work and where we had the opportunity to discuss questions related to the presented work.

\end{document}